\setlist[enumerate]{leftmargin=.5in}
\setlist[itemize]{leftmargin=.5in}
\numberwithin{equation}{section}
\newcommand{\bea}{\begin{eqnarray}}
\newcommand{\eea}{\end{eqnarray}}
\newcommand{\beaa}{\begin{eqnarray*}}
\newcommand{\eeaa}{\end{eqnarray*}}
\def\cN{{\mathcal N}}
\def\hC{\mathbb{C}}
\def\hE{\mathbb{E}}
\def\hR{\mathbb{R}}
\def\qq{\qquad}
\def\sig{\mathsf{Sig}^m}
\def\cU{\mathcal{U}}
\def\cZ{\mathcal{Z}}
\newcommand{\ts}{\mathsf{T}}
\crefname{hypothesis}{Hypothesis}{Hypotheses}
\title{Deep Signature Algorithm for Multi-dimensional  Path-Dependent   Options\thanks{Submitted to the editors DATE.
\funding{E. Bayraktar is partially supported by the National Science Foundation under grant DMS-2106556 and by the
Susan M. Smith chair. Q. Feng is partially supported by the National Science Foundation under grant DMS-2306769.}}}
\author{Erhan Bayraktar\thanks{Department of Mathematics, University of Michigan, Ann Arbor, MI, 48109; Email: \url{erhan@umich.edu}.}
\and Qi Feng\thanks{Department of
Mathematics, Florida State University, Tallahassee, FL, 32306; Email: \url{qfeng2@fsu.edu}.}
\and Zhaoyu Zhang\thanks{Department of Mathematics, University of California, Los Angeles, Los Angeles, CA 90095. Email: \url{zhaoyu@math.ucla.edu}.}}
\begin{document}
\nolinenumbers
\maketitle

\begin{abstract}
In this work, we study the deep signature algorithms for path-dependent options. We extend the backward scheme in [Hur\'e-Pham-Warin. Mathematics of Computation 89, no. 324 (2020)] for state-dependent FBSDEs with reflections to path-dependent FBSDEs with reflections, by adding the signature layer to the backward scheme. Our algorithm applies to both European and American type option pricing problems while the payoff function depends on the whole paths of the underlying forward stock process. We prove the convergence analysis of our numerical algorithm with explicit dependence on the truncation order of the signature and the neural network approximation errors. Numerical examples for the algorithm are provided  including: Amerasian option under the Black-Scholes model, American option with a path-dependent geometric mean payoff function, and the Shiryaev's optimal stopping problem.
\end{abstract}

\begin{keywords}
Reflected FBSDEs; signature; neural network; Amerasian option; Path-dependent American options; Signature; Optimal stopping.
\end{keywords}

\begin{MSCcodes}
Primary: 65C30, 60H35 ; Secondary: 65M75.
\end{MSCcodes}

\section{Introduction}
The recent development of deep learning and neural networks in optimal stopping \cite{bayer2021optimal, becker2019, becker2021solving, gao2022convergence, gonon2022deep, herrera2021optimal, BernardLelong2021}, optimal control \cite{wang2022deep}, and optimal switching \cite{bayraktar2022neural} promote the developments and application of forward backward stochastic differential equations (FBSDEs) in various financial and economic fields. The idea of combining BSDEs and deep learning is first introduced in \cite{han2018solving}. There are several works further advance such an idea for coupled FBSDEs \cite{han2020convergence}, path-dependent FBSDEs \cite{feng2021deep}, which all relies on the forward Euler scheme of the backward process. On the other hand, the backward scheme of the backward process was recently studied in \cite{wang2018deep, hure2020deep}, which is then further developed to solve optimal stopping \cite{bayer2021optimal, bayer2022pricing}, and optimal switching \cite{bayraktar2022neural} problems.
 
In the current work, we focus on solving path-dependent European and American option pricing problems by using FBSDEs with/without reflection and adding a signature layer. In particular, our algorithm can be used to solve optimal stopping problems in the backward schemes in a path-dependent setting. The path-dependent American option pricing problems have been studied in \cite{dingpricing, auster2022jdoi} using diffusion operator integral methods, \cite{lelong2019} by using Wiener chaos expansion, \cite{hansen2000analytical} by deriving  analytical formulas, \cite{barraquand1996pricing} using forward shooting grid, to list a few. {More precisely, the signature layer and the LSTM neural networks have been used in path-dependent FBSDE algorithm \cite{feng2021deep} and backward PDE algorithm \cite{Sabate2020} for European options. In the current backward setting, we need to modify the signature layer input, as the LSTM structure does not fit into the backward scheme for American options.}  Furthermore, we reduce the regularity assumption for the coefficients of the BSDE, which improves the signature layer idea introduced in \cite{feng2021deep}. We do not need to apply Taylor expansion to apply the signature approximation, which makes our algorithm applicable for more general terminal payoff function and the generator of the BSDE. In particular, we provide the convergence/error analysis regarding the truncation order of the signature layer in our algorithm, which is not known in \cite{feng2021deep, Sabate2020}. Indeed, the error analysis is new in the literature for path-dependent American option pricing and optimal stopping problems. The idea of applying  signature has been used in \cite{bayer2021optimal} to study randomized stopping time associated with the signature, which is not in the FBSDE framework and is not suitable for American type path-dependent payoff function.  A different path-dependent scenario  is recently studied in \cite{herrera2021optimal}, where they consider non-Markovian stock price and state dependent option payoffs. Some truly path-dependent American/Bermudan type option problems, which can not be transformed into  state-dependent problems,  are recently investigated in \cite{lelong2019} with moving average of the stock price, and \cite{becker2019} with delays. Both of the examples are only investigated in one dimension. In particular, the delay problem investigated in \cite{becker2019, becker2020} deals with large time horizon and does not have convergence analysis.  The high dimension examples are carried out in \cite{BernardLelong2021, becker2021solving} for only state-dependent options.  We investigate both scenarios in the current work in a more general setting, which are in line with our error analysis.  We focus on solving path-dependent American type option pricing problems (i.e., optimal stopping problems) by using signature layers, which naturally solve the difficulty from the path-dependent feature. Comparing to these related works on path-dependent American type options and optimal stopping problems, we are able to compute examples in relatively high dimension and provide the error analysis at the same time.  As direct application of our algorithm, it is reasonable to further apply our backward signature scheme together with obliquely reflected BSDEs to solve path-dependent optimal switching problems  in the energy markets \cite{carmona2008pricing, bayraktar2022neural, benezet2022switching} and other related fields. This will be studied in future works.

 The paper is organized as below. In Section 2, we introduce the preliminary and the main algorithm. In Section 3, we show the convergence analysis for the path-dependent backward deep signature FBSDE algorithm with and without reflection. Under the additional smoothness assumption of the vector fields for the forward SDE, we show the error analysis due to the truncation order of the signature process. In Section 4, we focus on numerical examples involving path-dependent American type options: American type Asian option pricing problems, American option with a path-dependent geometric mean payoff function, and the Shiryaev's optimal stopping problem.

 \section{Main algorithm}\label{sec-main}
In this paper we study backward numerical algorithms for the following  path-dependent decoupled FBSDE, for any $0\le t\le T$,
\begin{equation}
\label{Path FBSDEs non reflection}
\begin{cases}
X_{t} = x + \int_{0}^{t}b(s,X_{s})ds + \int_{0}^{t}\sigma(s,X_{s})dW_{s}, \\
Y_{t}= g(X_{\cdot\wedge T}) + \int_{t}^{T}f(s, X_{\cdot\wedge s}, Y_{s}, Z_{s})ds - \int_{t}^{T}Z_{s}dW_{s},\quad Y_T=g(X_{\cdot\wedge T}),
\end{cases}
\end{equation}
and the following path-dependent decoupled FBSDE with reflections, 
\begin{equation}
\label{Path FBSDEs}
\begin{cases}
X_{t} = x + \int_{0}^{t}b(s,X_{s})ds + \int_{0}^{t}\sigma(s,X_{s})dW_{s}, \\
Y_{t}= g(X_{\cdot\wedge T}) + \int_{t}^{T}f(s, X_{\cdot\wedge s}, Y_{s}, Z_{s})ds - \int_{t}^{T}Z_{s}dW_{s}+K_T-K_t,\quad Y_T=g(X_{\cdot\wedge T}),\\
Y_t\ge g(t,X_{\cdot\wedge t}),\quad 0\le t \le T,
\end{cases}
\end{equation}
where $K$ is an adapted non-decreasing process satisfying 
\bea
\int_0^T (Y_t-g(t,X_{\cdot\wedge t})) dK_t=0,
\eea 
and $W$ is a $d_1$-dimensional Brownian motion on a complete filtered probability space $(\Omega, \mathcal F, \mathbb F,$ $\mathbb P)$,  with natural filtration $\mathbb F=(\mathcal F_t)_{0\le t\le T}$. Throughout this paper, we denote $X_{\cdot\wedge t}$ as the whole path of $X$ on $[0,t]$, for $t\in[0,T]$. The path-dependent FBSDE could be used to solve path-dependent European option pricing, which has been studied in a forward Euler scheme with signature layer in \cite{feng2021deep}. In the current paper, we apply the signature layer in the backward Euler scheme and provide explicit error regarding the truncation order of the signature process. The reflected FBSDE \eqref{Path FBSDEs} is motivated by optimal stopping and American option pricing problems (e.g., \cite{longstaff2001valuing, el1997reflected}). The state-dependent and mean field version of \eqref{Path FBSDEs} have been studied in \cite{ma2005representations, li2014reflected, hu2022mean} and the references therein. Following the idea from \cite{hure2020deep} for state-dependent reflected FBSDEs, we propose the following backward scheme in the current path-dependent setting.  Firstly, we consider the following Euler scheme for the forward process $X_t$,
\bea \label{euler for X}
X^n_{t_{i+1}}=X^n_{t_i}+b(t_i,X^n_{t_i})\Delta t_{i}+\sigma(t_i,X^n_{t_i})\Delta W_{t_i},
\eea 
for $i=0,1,\cdots,n-1$, $\Delta t_i=t_{i+1}-t_i=T/n$ and $X_0=x_0$. We then apply the following signature layer to the discrete sequential points $\{X^n_{t_i}\}_{i=0}^n$.  {For a more comprehensive review of signature in the rough paths theory, we refer interesting readers to \cite{lyonsqian, friz2010, frizhairer}.}
\begin{definition}[Signature]
For a bounded variation path $x_t\in\hR^d$, for $t\in[0,T]$,  the  signature of enhanced path $(t,x_t)_{t\in[0,T]}$ is defined as the iterated integrals of $(t,x)_{t\in[0,T]}$. More precisely, for a word $J=(j_1,\cdots,j_k)\in\{1,\cdots,d\}^k$ with size $|J|=k$, we denote $\pi_m$ as the truncation of the signature up to degree $m$ as below. For $x^0_t=t$,we have
\begin{equation}
\pi_m(\mathsf{Sig}(x)_t)=\Big(1,\sum_{j=0}^d \int_0^tdx^j_{t_1},\cdots,\sum_{|J|=m}\int_{0<t_1<\cdots<t_m<t}dx^{j_1}_{t_1}\cdots dx^{j_m}_{t_m}\Big).
\end{equation}
\end{definition}
\begin{definition} Consider a discrete $d$-dimensional time series $(x_{t_{i}})_{i = 1}^{n}$ over time interval $[0, T]$. A signature layer of degree $m$ is a mapping from $\hR^{(d+1) \times n}$ to $\hR^{\widehat{d}}$, which computes {$\sig(x)_{t_i}$}  as an output for any $(t, x)$, where {$\sig(x)_{t_i}$}  is the truncated signature of $(t,x)$ over time interval $[0, t_{i}]$ of degree $m$ as follows:
\bea
\sig(x)_{t_i} = \pi_m(\mathsf{Sig}(x)_{[0,t_{i}]}),
\eea
where $i \in \{1, ..., n\}$ and $\widehat{d}$ is the dimension of the truncated signature.
\end{definition}

Following the above definition, we denote  $\tilde n=n/k$ as the number of segments (i.e. signature layers) for the sequence $\{X^n_{t_i}\}_{i=0}^n$ and denote $k\in \mathbb N^+$ as the number of data points in each segment.  That said, we have $u_i=T/n*i*k$ for $i=1,\cdots, \tilde n.$ We then define 
\bea\label{scheme X}
X^n_{u_{i+1}}=X^n_{u_i}+b(u_i,X^n_{u_i})\Delta u_{i}+\sigma(u_i,X^n_{u_i})\Delta W_{{u_i}},
\eea 
for $i=0,1,\cdots,\tilde n-1$. We then consider the linear interpolation of the sequence $\{X^n_{t_i}\}_{i=0}^{n}$  as the continuous paths input for generating the signature. We are now ready to introduce the following backward scheme, 
\beaa 
\cU_{u_{i+1}}^{\theta,\sig}= F^{\tilde n,\sig}(u_i, X^n_{\cdot\wedge u_i},\cU^{\theta,\sig}_{u_i},\cZ^{\theta,\sig}_{u_i},\Delta u_i,\Delta W_{u_i}),
\eeaa 

where
\begin{equation}\label{scheme Y}
	\begin{split}
&F^{\tilde n,\sig}(u_i,X^n_{\cdot\wedge u_i},\cU^{\theta,\sig}_{u_i},\cZ^{\theta,\sig}_{u_i},\Delta u_i,\Delta W_{u_i})\nonumber \\
&:=\cU^{\theta,\sig}_{u_i}-f(u_i,X^n_{\cdot\wedge u_i}, \cU^{\theta,\sig}_{u_i},\cZ^{\theta,\sig}_{u_i})\Delta u_i+\cZ^{\theta,\sig}_{u_i} \Delta W_{u_i},
	\end{split}
\end{equation} 
and we denote 
\bea \label{NN Y}
\cU_{u_i}^{\theta,\sig}:=\mathcal R^{\theta}(\sig(X^n)_{u_i};\xi  ),\quad
\cZ_{u_i}^{\theta,\sig}:= \mathcal R^{\theta}(\sig(X^n)_{u_i};\eta). 
\eea 
Here, for $\theta=(\xi,\eta)$, we fix the neural network $\mathcal R^{\theta}$ at each time step $u_i$, for $i=0,\cdots,\tilde n-1$, with input dimension $\widehat  d$ for the truncated signature $\sig(X^n)_{u_i}$, and output dimension $\tilde d =1$ for $\mathcal U_{u_i}$ and $\tilde d = d_1$ for $\mathcal Z_{u_i}$. {In particular, we denote $\xi$ (resp. $\eta$) as the parameter for $\mathcal U_{u_i}$ (resp. $\mathcal Z_{u_i}$), which are assumed to be independent}. 
The backward scheme for path-dependent FBSDE \eqref{Path FBSDEs non reflection} is defined as: 
\begin{equation}\label{alg: state}
\begin{cases}
\textbf{Step 1}: \text{Initialization:}~ \widehat \cU_{{u_{\tilde{n}}}}=g(X_{\cdot\wedge T});\\
\textbf{Step 2}: \text{For}~ i=\tilde n-1,\cdots, 0, \text{given}~ \widehat \cU_{u_{i+1}},  \text{compute the minimizer of the loss function}\\
\begin{cases}
 L_{u_i}(\theta)&:= \hE\Big| \widehat \cU_{u_{i+1}}(X^n_{\cdot\wedge u_{i+1}}) -F^{\tilde n,\sig}(u_i,X^n_{\cdot\wedge u_i},\cU^{\theta,\sig}_{u_i},\cZ^{\theta,\sig}_{u_i},\Delta u_i,\Delta W_{u_i})\Big|^2, \\
&\theta^*\in \arg \min_{\theta}  L_{u_i}(\theta).
 \\
\text{Update:} &  \widehat \cU_{u_i}
  =\cU_{u_i}^{\theta^*,\sig}, \text{and set}~ \widehat{\mathcal Z}_{u_i}=\mathcal Z_{u_i}^{\theta^*,\sig}.
\end{cases}
\end{cases}
\end{equation}
And the backward scheme for path-dependent FBSDE with reflection \eqref{Path FBSDEs} is defined as:
\begin{equation} \label{alg:the_alg}
\begin{cases}
\textbf{Step 1 {: }} \text{Initialization}: \widehat \cU_{{u_{\tilde{n}}}}=g(X_{\cdot\wedge T});\\
\textbf{Step 2 {: }}  \text{For } i=\tilde n-1,\cdots, 0, \text{given}~ \widehat \cU_{u_{i+1}},  \text{compute the minimizer of the loss function} \\
\begin{cases}
\widehat L_{u_i}(\theta)&:= \hE\Big| \widehat \cU_{u_{i+1}}(X^n_{\cdot\wedge u_{i+1}}) -F^{\tilde n,\sig}(u_i,X^n_{\cdot\wedge u_i},\cU^{\theta,\sig}_{u_i},\cZ^{\theta,\sig}_{u_i},\Delta u_i,\Delta W_{u_i})\Big|^2,\\
&\theta^*\in \arg \min_{\theta} \widehat L_{u_i}(\theta).\\
\text{Update}: & \widehat \cU_{u_i}
=\max[\cU_{u_i}^{\theta^*,\sig},g(u_i,X_{\cdot\wedge u_i})],~ \text{and set} ~\widehat{\mathcal Z}_{u_i}=\mathcal Z^{\theta^*,\sig}_{u_i}.
\end{cases}\\
\end{cases}
\end{equation}

\section{Convergence Analysis}
Throughout this section, we keep the following assumption.
\begin{assumption}
\label{main assumption}
Let the following assumptions be in force.
\begin{itemize}
\item $b\in \hC(\hR^+\times\hR^d;\hR^d), \sigma\in \hC(\hR^+\times\hR^d;\hR^{d\times d_1}), f\in \hC(\hR^+\times \hC(\hR_+;\hR^d)\times \hR\times \hR^{ d_1};\hR), g\in \hC(\hR^+\times \hC(\hR_+;\hR^d);\hR)$ are Lipschitz continuous functions with Lipschitz constant $L$ in all variables; $b(\cdot, 0), \sigma(\cdot, 0), f(\cdot, 0, 0, 0)$ and $g(\cdot, 0)$ are bounded.
\item $b$ and $\sigma$ are smooth with all their derivatives bounded.
\item The terminal function $g$ satisfies a linear growth condition.
\end{itemize}
\end{assumption}
In this section, we show the convergence analysis for Algorithm \eqref{alg: state} and Algorithm \eqref{alg:the_alg}. We first prove the convergence analysis for the deep signature backward scheme associated with the FBSDE without reflection \eqref{Path FBSDEs non reflection}. In particular, we show the explicit dependence of the truncation order of the signature process in the convergence analysis, which is given in the following key lemma.

\begin{lemma}\label{lemma: sig error}
Under Assumption \ref{main assumption}. For any $2\le m\in\mathbb N^+$, $\Delta t=T/n$, and $\tilde n=n/k$ as the number of segments for the signature process, we have 
\beaa  
\mathbb E\Big[ 
\|\mathsf{Sig}(\tilde X^n)_T-\sig(\tilde X^n)_T \|^2\Big] \le  C_{T,d_1,b,\sigma}(k\Delta t)^{m+1}\tilde n,
\eeaa 
where we denote  $\tilde X^n$ as the continuous interpolation of $X^n$ in \eqref{euler for X}. The constant $C$ depends on $T$, $d_1$, and the bound of $b$, $\sigma$ together with all their derivatives up to order $m$.
\end{lemma}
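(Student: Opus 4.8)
\emph{Proof plan.} The strategy is to reduce the global signature truncation error to a per-segment estimate via Chen's identity, and then to bound the tail of the signature over a single segment. Write $h:=k\Delta t$ for the length of one segment and, for $l=1,\dots,\tilde n$, set $S_l:=\mathsf{Sig}(\tilde X^n)_{[u_{l-1},u_l]}$, let $\pi_m(S_l)$ be its truncation at degree $m$, and put $R_l:=S_l-\pi_m(S_l)=\sum_{j\ge m+1}S_l^{(j)}$. By Chen's identity $\mathsf{Sig}(\tilde X^n)_T=S_1\otimes\cdots\otimes S_{\tilde n}$, while $\sig(\tilde X^n)_T$ is assembled by the signature layer from the truncated blocks $\pi_m(S_l)$; subtracting and telescoping the tensor products writes $\mathsf{Sig}(\tilde X^n)_T-\sig(\tilde X^n)_T$ as a sum of $\tilde n$ terms, the $l$-th one carrying exactly one factor $R_l$ (with $\pi_m$'s on its left and full $S_{l'}$'s on its right). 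Passing to $L^2$ norms, conditioning successively on the filtration $(\mathcal F_{u_l})_l$ --- the increments of the $l$-th segment being independent of $\mathcal F_{u_{l-1}}$ --- and using the a priori bounds $\hE\|S_l\|^2,\hE\|\pi_m(S_l)\|^2\le 1+Ch$, the surviving factors are absorbed into $(1+Ch)^{\tilde n}\le e^{CT}$, so that
\begin{equation*}
\hE\big\|\mathsf{Sig}(\tilde X^n)_T-\sig(\tilde X^n)_T\big\|^2\ \le\ e^{CT}\sum_{l=1}^{\tilde n}\hE\|R_l\|^2 ;
\end{equation*}
it therefore suffices to show $\hE\|R_l\|^2\le C_{T,d_1,b,\sigma}\,h^{m+1}$ uniformly in $l$.

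On the $l$-th segment the path $\tilde X^n$ is piecewise linear, with (enhanced) increments $v_i=(\Delta t,\,b(t_i,X^n_{t_i})\Delta t+\sigma(t_i,X^n_{t_i})\Delta W_{t_i})$ over the $k$ linear pieces, and its degree-$j$ signature is the iterated-integral sum
\begin{equation*}
S_l^{(j)}=\sum_{r=1}^{j}\ \sum_{p_1<\cdots<p_r}\ \sum_{a_1+\cdots+a_r=j,\ a_\nu\ge1}\frac{v_{p_1}^{\otimes a_1}\otimes\cdots\otimes v_{p_r}^{\otimes a_r}}{a_1!\cdots a_r!}.
\end{equation*}
Working conditionally on $\mathcal F_{u_{l-1}}$ I would Taylor-expand $b$ and $\sigma$ along the Euler path about the left endpoint $(u_{l-1},X^n_{u_{l-1}})$ up to order $m$; by Assumption~\ref{main assumption} the derivatives $D^\beta b,D^\beta\sigma$ of order $\le m$ are bounded and $X^n_{t_i}-X^n_{u_{l-1}}=\cO(\sqrt h)$ in every $L^p$, so the dominant part of each $v_{p_\nu}$ is $\sigma(u_{l-1},X^n_{u_{l-1}})\Delta W_{p_\nu}$, whose prefactor is $\mathcal F_{u_{l-1}}$-measurable and whose Brownian factors are independent and Gaussian with $\hE[\|\Delta W_{p_\nu}\|^{2a_\nu}\mid\mathcal F_{u_{l-1}}]\le C^{a_\nu}(\Delta t)^{a_\nu}$, while the Taylor remainder and lower-order Taylor terms are $\cO(\sqrt h)$ smaller. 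Substituting, taking conditional expectations, exploiting the independence of distinct Brownian increments, and doing the combinatorics --- at most $k^r/r!$ choices of $r$ active pieces out of $k$, and $\sum_{a_1+\cdots+a_r=j}\prod_\nu(a_\nu!)^{-2}$ kept under control by the shuffle/multinomial structure --- one arrives at $\hE\|S_l^{(j)}\|^2\le C^{\,j}h^{\,j}/j!$, with $C$ depending only on $T$, $d_1$ and the bounds on $b,\sigma$ and their derivatives up to order $m$. Summing over $j\ge m+1$ gives $\hE\|R_l\|^2\le\tfrac{C^{m+1}}{(m+1)!}h^{m+1}e^{Ch}\le C'h^{m+1}$, and then summing over $l$ and using the display above yields the claimed bound $C_{T,d_1,b,\sigma}(k\Delta t)^{m+1}\tilde n$.

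The heart of the matter is this per-segment tail estimate, specifically getting the correct power $h^{m+1}=(k\Delta t)^{m+1}$: a crude pathwise (one-variation) bound on a segment is far too large --- it does not even see the right power of $h$ --- because it throws away the time ordering of the iterated integrals, so one must keep that ordering and use the conditional independence and Gaussian moments of the Euler increments. This is exactly what forces the Taylor expansion of $b,\sigma$ to order $m$ (so that, on a segment, the dominant part of every increment carries a common $\mathcal F_{u_{l-1}}$-measurable prefactor and the remaining Brownian factors are genuinely independent), hence the dependence of the constant on the derivatives of $b,\sigma$ up to order $m$; some care is also needed in controlling the moments of $X^n$ that enter these constants. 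Granting the per-segment estimate, the rest --- Chen's identity, the telescoping, the successive conditioning, and the bound $\hE\|S_l\|^2\le 1+Ch$ that rules out any blow-up as $\tilde n\to\infty$ --- is routine.
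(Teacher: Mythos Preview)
Your plan is sound and arrives at the same bound, but the route to the per--segment estimate is genuinely different from the paper's. Both arguments share the global--to--local step: Chen's identity plus a telescoping of the tensor product of segment signatures. For the local estimate, however, the paper does \emph{not} exploit the piecewise structure of $\tilde X^n$ directly. Instead it views $\sig(\tilde X^n)_t$ as the solution of an SDE on the step-$m$ Carnot group (equation~\eqref{sig m SDE}) driven by the continuous Euler interpolation \eqref{tilde X SDE}, invokes the Chen--Strichartz expansion and the Castell/Azencott tail bound for the stochastic Taylor remainder (equation~\eqref{remainder est}), and then integrates the tail estimate to obtain $\hE\|\mathsf{Sig}-\sig\|^2\le CS^{m+1}$ on $[0,S]$. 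The smoothness of $b,\sigma$ up to order $m$ enters there through the Castell estimates on the vector fields $U_j$. Your approach --- writing the segment signature of the piecewise--linear path explicitly as a product of tensor exponentials and bounding each degree by conditioning and Gaussian moment counting --- is more elementary and self-contained, avoiding the appeal to the Azencott--Castell machinery; the price is that the combinatorics you allude to (getting $h^j$ rather than $(kh)^j$, i.e.\ recovering the correct power from the ordered simplex structure and the independence of Brownian increments) are nontrivial and would have to be carried out carefully. Note also a small discrepancy in interpretation: the paper's proof works with the SDE-based continuous interpolation \eqref{tilde X SDE} of $X^n$, not the piecewise-linear one you assume; both appear in the paper (Section~\ref{sec-main} mentions linear interpolation for the algorithm), and either reading gives the stated order, but you should be explicit about which one you mean. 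Finally, for the telescoping step the paper controls the surviving factors via Cauchy--Schwarz and a global bound on $\hE\|\mathsf{Sig}(\tilde X^n)_{[0,T]}\|^4$ rather than your successive conditioning with $\hE\|S_l\|^2\le 1+Ch$; your version is cleaner when $b,\sigma$ are bounded, but under the linear-growth setting of Assumption~\ref{main assumption} the conditional bound picks up $|X^n_{u_{l-1}}|$-dependence, so you will need either moment bounds on $X^n$ or to fall back on H\"older/Cauchy--Schwarz as the paper does.
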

\begin{proof}
\noindent\textbf{Step 1 (tail estimates):} 
Denote $(X^n_i)_{i=1}^n$ as the discrete approximation from Euler scheme with step size $\Delta t=T/n$, and $\tilde X^n$ as the continuous interpolation of $X^n$.
We consider the following SDE for $(\tilde X^n_t)_{t\in[0,T]}$,
\bea\label{tilde X SDE} 
\tilde X^n_t=X_0+\int_0^t  b(\lfloor s/\Delta t \rfloor \Delta t,\tilde X^n_{\lfloor s/\Delta t \rfloor \Delta t})ds+\int_0^t\sigma(\lfloor s/\Delta t \rfloor \Delta t,\tilde X^n_{\lfloor s/\Delta t \rfloor \Delta t})dW_s,
\eea 
where we have $\tilde X^n_t=\tilde X^n_{i\Delta t}=X^n_{i}$, for $t=i\Delta t$, $i=1,\cdots, n$. The solution $\tilde X^n$ has the following tail probability for general SDEs, (see e.g. \cite{Aze}[Appendix 2] for the Brownian motion case, and a stronger estimates in \cite{Baudoin19}[Proposition 2.10]) for general fractional Brownian motion case),
\bea\label{tail tilde X}
\mathbb P(\sup_{t\in[0,T]}|\tilde X^n_t -x_0|\ge \xi )\le \exp(-\frac{c \xi^{2}}{t}).
\eea 

\noindent\textbf{Step 2 (tail estimates for the remainder term of signature):} The truncated signature $\sig
(\tilde X^n)_t$ of the path $(\tilde X_s^n)_{s\in[0,t]}$ with $t\in[0,T]$  at order m satisfies the following SDE, 
\bea\label{sig m SDE}
d \sig
(\tilde X^n)_t =\sum_{i=1}^{d} V_i( \sig
(\tilde X^n)_t){\circ }d\tilde X^{i,n}_t,
\eea 
and 
\bea 
\label{X_i sde}
\tilde X^{i,n}_t = \int_0^t  b_i(\lfloor s/\Delta t \rfloor \Delta t,\tilde X^n_{\lfloor s/\Delta t \rfloor \Delta t})ds+\int_0^t \sum_{j=1}^{d_1}\sigma_{ij}(\lfloor s/\Delta t \rfloor \Delta t,\tilde X^n_{\lfloor s/\Delta t \rfloor \Delta t})dW^j_s, 
\eea 
where  the vector fields $V_i$ form a horizontal basis of the Carnot group of depth $m$ and the explicit forms of $V_i$, $i=1,\cdots,d$ are given in \cite{friz2010}[Proposition 7.8 \& Remark 7.9]. Note that, \eqref{sig m SDE} is usually taken as a Stratonovich SDE (see e.g. \cite{BaudoinFengOuyang} for truncated signature of fBm) in the rough path sense. In the current It\^o SDE setting, there is an equivalent form (see e.g. \cite{Baudoin}[Remark 1.4]),  we will simply take \eqref{sig m SDE} in the {Stratonovitch} form. This is based on the following equivalent formulation. There exits $\tilde b$ and $\tilde \sigma$, such that, by using the first order projection $\pi_1(\sig(\tilde X)_t)$, we have 
\begin{equation*}
	\begin{split} 
b_i(\lfloor s/\Delta t \rfloor \Delta t,\tilde X^n_{\lfloor s/\Delta t \rfloor \Delta t})&=\tilde  b_i\circ \pi_1(\sig(\tilde X)_t),\\
 \sigma_{ij}(\lfloor s/\Delta t \rfloor \Delta t,\tilde X^n_{\lfloor s/\Delta t \rfloor \Delta t})&=\tilde  \sigma_{ij}\circ \pi_1(\sig(\tilde X)_t).
\end{split}
\end{equation*}
{With some abuse of notation, we use $\circ$ as the composition of two functions in the above definition.} 
Combining \eqref{sig m SDE}, \eqref{X_i sde}  with the above notation, for $B^0_t=t$, we have 
\bea\label{sig m SDE: new}
d \sig
(\tilde X^n)_t =\sum_{j=0}^{d_1} U_j( \sig
(\tilde X^n)_t)\circ dW^j_t,
\eea 
where we have $U_0^i( \sig
(\tilde X^n)_t)= V_i( \sig
(\tilde X^n)_t)\cdot \tilde b_i\circ \pi_1(( \sig
(\tilde X^n)_t))$ and $U_j^i( \sig
(\tilde X^n)_t)=V_i( \sig
(\tilde X^n)_t)\cdot \tilde \sigma_{ij}\circ \pi_1(( \sig
(\tilde X^n)_t))$ for $i=1,\cdots, d$ and $j=1,\cdots, d_1$.
  Following the Chen-Strichartz formula in \cite{Baudoin, Cast, BaudoinFengOuyang} for our SDE \eqref{sig m SDE}, we have 
\bea\label{m chen formula}
\sig(\tilde X)_t=\exp\Big(\sum_{I,l(I)\le m} \Lambda_I(\tilde X^n)_tV_{I} \Big),
\eea  
where we denote $I\in \{1,\cdots, d\}^k$ as a word, and $V_I=[V_{i_1},[V_{i_2},\cdots,[V_{i_{k-1}},V_{i_k}]\cdots ]$ as the iterated Lie bracket. We refer the explicit definition of the exponential map and the explicit formula of $\Lambda_I(\tilde X^n)_t$ to \cite{Baudoin}[Theorem 1.1] and \cite{Cast}[Theorem 2.1]. Indeed, following the Taylor expansion \cite{Aze} and Castell estimates \cite{Cast} for the following SDE associated with the signature process $\mathsf{Sig}
(\tilde X^n)_t$,  
\bea\label{sig SDE}
d \mathsf{Sig}
(\tilde X^n)_t =\sum_{i=1}^{d} V_i(  \mathsf{Sig}
(\tilde X^n)_t) \circ d\tilde X^n_t,
\eea 
we have the following expansion, 
\begin{equation*}
	\begin{split}
\mathsf{Sig}
(\tilde X^n)_t &=\exp\Big(\sum_{I,l(I)\le m} \Lambda_I(\tilde X^n)_tV_{I} \Big)+t^{\frac{m+1}{2}}R_{m+1}(t)\\
&=\sig (\tilde X^n)_t +t^{\frac{m+1}{2}}R_{m+1}(t),
\end{split}
\end{equation*} 
and there exists a random time $\zeta$, some constants $\alpha, c>0$, for every $\xi\ge 1$, such that 
\bea\label{remainder est}
\mathbb P(\sup_{t\in [0,\tau]} \|t^{\frac{m+1}{2}}R_{m+1}(t)\|\ge \xi \tau^{\frac{m+1}{2}}; \tau< \zeta ) \le \exp(-\frac{c\xi^{\alpha} }{\tau}).
\eea 
The above estimates \eqref{remainder est} follows from the tail estimates for SDE \eqref{tail tilde X}. The proof follows from \cite{Aze, Cast} for SDE driven by Brownian motion, see also \cite{fengzhang}[Theorem 3.3 and Remark 3.5 with $H=1/2$].  The random time $\zeta$ is the exit time of $\Lambda_{I, |I|\le m}(\tilde X^n)_t$ in a compact domain $K$. We pick the compact domain $K$ large enough, such that $\Lambda_{I,|I|\le m}(\tilde X^n)_t$ stays inside $K$ for $t\in [0,T]$. 
 
 \noindent \textbf{Step 3 ( remainder estimate on small interval $[0,S]$):} Following \textbf{Step 2}, assuming $t\in[0, S]$ with small $S<T$, and $K$ large enough with $T\le \zeta$, there exist constants $c, \alpha>0$, such that
 \bea
\mathbb P(\sup_{t\in [0,S ]} \|\mathsf{Sig}(\tilde X^n)_t-\sig(\tilde X^n)_t \|\ge \xi S^{\frac{m+1}{2}}) \le \exp(-\frac{c\xi^{\alpha} }{S}).
\eea 
Namely, we have 
  \beaa
\mathbb P( \|\mathsf{Sig}(\tilde X^n)_S-\sig(\tilde X^n)_S \|^2\ge \xi^2 S^{m+1})=\mathbb P( \|\mathsf{Sig}(\tilde X^n)_S-\sig(\tilde X^n)_S \|\ge \xi S^{\frac{m+1}{2}}) \le \exp(-\frac{c\xi^{\alpha} }{S}).
\eeaa 
Denote $Y=\|\mathsf{Sig}(\tilde X^n)_S-\sig(\tilde X^n)_S \|^2$, we have 
\begin{equation}\label{sig L2 bound}
	\begin{split}
\mathbb E[\|\mathsf{Sig}(\tilde X^n)_S-\sig(\tilde X^n)_S \|^2]=&\mathbb E[Y]\\
=&\int_0^{\infty} \mathbb P(Y>y)dy\\
(\text{change of variable:}\quad y=\xi S^{m+1})=& S^{m+1}\int_0^{\infty} \mathbb P(Y\ge \xi S^{m+1}) d\xi\\
\le & S^{m+1} \int_0^{\infty} \exp(-\frac{c \xi^{\alpha}}{S})d\xi \\
\le & C S^{m+1}.
\end{split}
\end{equation}

  \noindent \textbf{Step 4 (Chen's relation ):}
  Denote the time interval as $[0,T]=[0=u_0,u_1]\cup [u_1,u_2]\cup\cdots\cup [u_{\tilde n-1},u_{\tilde n}=T]$.
  By using the Chen's relation \cite{friz2010}[Theorem 7.11] , we have 
  \beaa 
  \mathsf{Sig}(\tilde X^n)_T=\mathsf{Sig}(\tilde X^n)_{[0,u_1]}\otimes \mathsf{Sig}(\tilde X^n)_{[u_1,u_2]}\cdots \mathsf{Sig}(\tilde X^n)_{[u_{\tilde n-1},u_{\tilde n}]},
  \eeaa 
  which holds true for $\sig(\tilde X^n)_T$ as well.
 We then have the following interpolation,
 \begin{equation*}
 	\begin{split}
 &\mathbb E\Big[  \Big\|\mathsf{Sig}(\tilde X^n)_{[0,u_1]}\otimes \mathsf{Sig}(\tilde X^n)_{[u_1,u_2]}\cdots \mathsf{Sig}(\tilde X^n)_{[u_{\tilde n-1},T]}\\
 &\quad\quad- \sig(\tilde X^n)_{[0,u_1]}\otimes \sig(\tilde X^n)_{[u_1,u_2]}\cdots \sig(\tilde X^n)_{[u_{\tilde n-1},T]}\Big\|^2\Big]\\
 =& \mathbb E\Big[ \Big\|\sum_{k=0}^{\tilde n-1} \Big( \otimes_{j=0}^{k}\mathsf{Sig}(\tilde X^n)_{[u_{j},u_{j+1}]}\otimes_{\tilde j=k+1}^{\tilde n-1}\sig(\tilde X^n)_{[u_{\tilde j},u_{\tilde j+1}]}\\
&\quad\quad -\otimes_{j=0}^{k-1}\mathsf{Sig}(\tilde X^n)_{[u_{j},u_{j+1}]}\otimes_{\tilde j=k}^{i-1}\sig(\tilde X^n)_{[u_{\tilde j},u_{\tilde j+1}]}  \Big)\Big\|^2\Big]\\
\le&\mathbb E \Big[\sum_{k=0}^{\tilde n-1}\|\mathsf{Sig}(\tilde X^n)_{[u_k,u_{k+1}]}- \sig(\tilde X^n)_{u_k,u_{k+1}} \|^2 \|\mathsf{Sig}(\tilde X^n)_{[0,T]} \|^2 \Big] \\
 \le &C (\frac{T}{\tilde n})^{m+1}*\tilde n =C (k\Delta t)^{m+1}*\tilde n,
 \end{split}
 \end{equation*} 
 where the last inequality follows from Cauchy-Schwarz inequality {together with applying the estimates in \eqref{sig L2 bound} for each  $\mathbb E \Big[\|\mathsf{Sig}(\tilde X^n)_{[u_k,u_{k+1}]}- \sig(\tilde X^n)_{u_k,u_{k+1}} \|^4\Big] $} and the constant $C$ depends on the upper bound of $\mathbb E[\|\mathsf{Sig}(\tilde X^n)_{[0,T]}\|^4 ]$. Based on the equivalent formulation \eqref{sig m SDE: new}, the dynamics of $\mathsf{Sig}(\tilde X^n)_t$ follows similar SDE driven by Brownian motion with vector fields $U_i$, $i=0,1,\cdots, d_1$. Following \cite{Baudoin}[Proposition 1.3 \& Remark 1.9], we have $\mathbb E[\mathsf{Sig}(\tilde X^n)_t]=\exp(tU_0+ \frac{1}{2}t\sum_{i=1}^{d_1}U_j^2)$. Similar to \eqref{m chen formula}, the signature $\mathsf{Sig}(\tilde X^n)_t$ also has an exponential representation. Given that $\tilde X^n$ and $\Lambda(\tilde X^n)_t$ stays inside a compact set, there exists an upper bound C depends on the time $T$, the dimension $d_1$, and the uniform upper bound of the vector fields $U_i$ together with their derivatives up to order $m$. The proof is thus completed.   
 \qed 
\end{proof}
 
Next, we introduce the following error term $\varepsilon^Z$  as the $L^2$-regularity of $Z$,
\beaa
\varepsilon^Z=:\hE[\sum_{i=0}^{\tilde n-1}\int_{u_i}^{u_{i+1}}|Z_t-\bar Z_{u_i}| dt ],\quad \text{with}\quad \bar Z_{u_i}:= \frac{1}{\Delta u_i}\hE_{u_i}[\int_{u_i}^{u_{i+1}}Z_t dt],
\eeaa
which means that $\varepsilon^Z=\mathcal O(k\Delta t)$ if $g$ is Lipschitz (see e.g. \cite{zhang2004numerical}). Furthermore, we introduce the following error between a continuous function and its neural network approximation at each time step $u_i$, for $i=0, 1\cdots, \tilde n-1.$
\begin{lemma}\label{lemma: error interpolation} 
For any continuous function $\mathsf v_{u_i}\in \hC(\hC([0,u_i];\hR^d);\hR)$ with input $X^n_{\cdot\wedge u_i}$ and its neural network approximation $\cU_{u_i}(\sig(X^n)_{u_i};\xi)$ with input $\sig(X^n)_{u_i}$ and neural network parameter $\xi$, we define the approximation error as below,
\begin{equation}\label{defn: NN error general}
\widetilde \varepsilon_{u_i}^{\cN,\mathsf v}:=\inf _{\xi}\hE|\mathsf v_{u_i}(X^n_{\cdot\wedge u_i}) -\cU_{u_i}(\sig(X^n)_{u_i};\xi)|^2.
\end{equation} 
We then have the following bound of the error,
\bea
\widetilde \varepsilon_{u_i}^{\cN,\mathsf v}\le \varepsilon_{u_i}^{\cN,\mathsf v}+\varepsilon_{u_i}^{\cN,\mathsf v,\mathsf{Sig}, m},
\eea 
where $ \varepsilon_{u_i}^{\cN,\mathsf v}$ denotes the error introduced from the neural network at time $u_i$, and $\varepsilon_{u_i}^{\cN,\mathsf v,\mathsf{Sig}, m}$ denotes the error from the truncation of the signature at order $m$.
\end{lemma}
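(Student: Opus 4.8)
The plan is to decompose $\widetilde\varepsilon^{\cN,\mathsf v}_{u_i}$ into two conceptually independent pieces: the error caused by representing the whole path $X^n_{\cdot\wedge u_i}$ through its \emph{truncated} signature $\sig(X^n)_{u_i}$ rather than its full signature, and the error caused by approximating a fixed continuous target on those signature features by the network $\mathcal R^\theta$. The bridge between the two is a factorization of $\mathsf v_{u_i}$ through the signature map, after which the claim is just the triangle inequality in $L^2(\Omega)$ together with Lemma~\ref{lemma: sig error}.

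First I would set up the factorization. The path $X^n_{\cdot\wedge u_i}$ is the linear interpolation of the finitely many Euler points on $[0,u_i]$, hence ranges over a finite-dimensional family parametrized by its increments $\vec a\in\hR^{dN_i}$; write $\phi(\vec a):=\mathsf v_{u_i}(\text{the piecewise-linear path with increments }\vec a)$, a continuous function of $\vec a$. By the tail estimate in the proof of Lemma~\ref{lemma: sig error}, i.e.\ \eqref{tail tilde X}, for any $\epsilon>0$ there is a compact box $B\subset\hR^{dN_i}$ with $\mathbb P(\vec a\notin B)\le\epsilon$. On $B$ the map $\mathcal S\colon\vec a\mapsto\mathsf{Sig}(X^n)_{u_i}$ is continuous (the signature of a piecewise-linear path is polynomial in its increments) and, thanks to the time augmentation $x^0_t=t$ built into the definition of the signature, injective; therefore $\mathcal S|_B$ is a homeomorphism onto its image, and $\mathsf w_{u_i}:=\phi\circ(\mathcal S|_B)^{-1}$ is continuous on the compact set $\mathcal S(B)$. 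Extending $\mathsf w_{u_i}$ continuously by Tietze to a fixed compact neighbourhood of the ranges of $\mathsf{Sig}(X^n)_{u_i}$ and $\sig(X^n)_{u_i}$, one gets $\mathsf v_{u_i}(X^n_{\cdot\wedge u_i})=\mathsf w_{u_i}(\mathsf{Sig}(X^n)_{u_i})$ off an event of probability $\le\epsilon$, whose contribution to every $L^2$ quantity below is controlled by the linear growth of $\mathsf v_{u_i}$ (and of $\mathsf w_{u_i}$) combined with \eqref{tail tilde X}, hence absorbed by sending $\epsilon\to0$.

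Next comes the split. For any network parameter $\xi$, inserting $\mathsf w_{u_i}(\sig(X^n)_{u_i})$ gives
\[
\mathsf v_{u_i}(X^n_{\cdot\wedge u_i})-\cU_{u_i}(\sig(X^n)_{u_i};\xi)=\big(\mathsf w_{u_i}(\mathsf{Sig}(X^n)_{u_i})-\mathsf w_{u_i}(\sig(X^n)_{u_i})\big)+\big(\mathsf w_{u_i}(\sig(X^n)_{u_i})-\cU_{u_i}(\sig(X^n)_{u_i};\xi)\big).
\]
Taking $\hE|\cdot|^2$, using $(a+b)^2\le 2a^2+2b^2$, and then $\inf_\xi$ yields $\widetilde\varepsilon^{\cN,\mathsf v}_{u_i}\le\varepsilon^{\cN,\mathsf v,\mathsf{Sig},m}_{u_i}+\varepsilon^{\cN,\mathsf v}_{u_i}$, where one \emph{defines} $\varepsilon^{\cN,\mathsf v,\mathsf{Sig},m}_{u_i}:=2\,\hE|\mathsf w_{u_i}(\mathsf{Sig}(X^n)_{u_i})-\mathsf w_{u_i}(\sig(X^n)_{u_i})|^2$ and $\varepsilon^{\cN,\mathsf v}_{u_i}:=2\inf_\xi\hE|\mathsf w_{u_i}(\sig(X^n)_{u_i})-\cU_{u_i}(\sig(X^n)_{u_i};\xi)|^2$. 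The second quantity is, by construction, the error of approximating the continuous function $\mathsf w_{u_i}$ on a compact neighbourhood of the range of the truncated signature by the network $\mathcal R^\theta$ — the ``neural network error at time $u_i$'' — which the universal approximation theorem makes arbitrarily small. The first quantity involves only the truncation: by uniform continuity of $\mathsf w_{u_i}$ on the compact set, together with Lemma~\ref{lemma: sig error} applied on $[0,u_i]$ (giving $\hE\|\mathsf{Sig}(\tilde X^n)_{u_i}-\sig(\tilde X^n)_{u_i}\|^2\le C(k\Delta t)^{m+1}\,i\to0$), it is the ``signature truncation error'' and vanishes as $m\to\infty$.

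The main obstacle is the factorization step: showing that an arbitrary continuous $\mathsf v_{u_i}$ genuinely factors through the signature via a \emph{continuous} functional $\mathsf w_{u_i}$. The three ingredients that make it work are (i) injectivity of the time-augmented signature on bounded-variation paths, (ii) compactness — available here essentially for free because $X^n_{\cdot\wedge u_i}$ lives in a finite-dimensional family and \eqref{tail tilde X} confines it to a box up to probability $\epsilon$ — and (iii) continuity of the (polynomial) signature map on piecewise-linear paths, which upgrades the set-theoretic inverse to a continuous one and lets Tietze's theorem produce a globally continuous $\mathsf w_{u_i}$. Once this is in place, the remainder is only the triangle inequality and the already-established Lemma~\ref{lemma: sig error}.
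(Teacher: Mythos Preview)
Your argument is correct for the lemma as stated, but it follows a genuinely different route from the paper. You factor $\mathsf v_{u_i}$ \emph{exactly} through the full signature via a continuous (nonlinear) $\mathsf w_{u_i}$, built from injectivity of the time-augmented signature on the finite-dimensional family of piecewise-linear paths, compactness from the tail estimate, and a Tietze extension; you then split into two pieces. The paper instead invokes the \emph{universal nonlinearity} of the signature: for any tolerance there is a \emph{linear} functional $\mathcal L$ on the tensor algebra with $|\mathsf v_{u_i}-\mathcal L\circ\mathsf{Sig}|$ uniformly small on a compact set $\mathfrak K$, and performs a three-term split $\mathcal I+\mathcal J+\mathcal K$, where $\mathcal I$ (linear-approximation error) and $\mathcal K$ (network error for $\mathcal L$) together form $\varepsilon^{\cN,\mathsf v}_{u_i}$, and $\mathcal J=\hE|\mathcal L(\mathsf{Sig})-\mathcal L(\sig)|^2$ is the truncation piece. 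The practical payoff of the paper's choice is that \emph{linearity} of $\mathcal L$ converts Lemma~\ref{lemma: sig error} directly into the quantitative bound $\varepsilon^{\cN,\mathsf v,\mathsf{Sig},m}_{u_i}\le C_{T,d_1,b,\sigma}(k\Delta t)^{m+1}i$, which is exactly what is summed later to obtain $\varepsilon^{\mathsf{Sig},m}\le C\tilde n^4(k\Delta t)^{m+1}$ in the theorems; your uniform-continuity argument for a general continuous $\mathsf w_{u_i}$ only gives $\varepsilon^{\cN,\mathsf v,\mathsf{Sig},m}_{u_i}\to 0$ as $m\to\infty$, not a rate, unless you additionally show $\mathsf w_{u_i}$ is Lipschitz on the relevant compact. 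Conversely, your route is more self-contained (no external signature-universality theorem) and makes transparent why the time augmentation and the piecewise-linear structure suffice. One caveat worth tightening in your write-up: $\mathsf w_{u_i}$ depends on the box $B$ (hence on $\epsilon$), so the two error terms you define also depend on $\epsilon$; you should fix $B$ once and absorb the tail contribution into $\varepsilon^{\cN,\mathsf v}_{u_i}$ rather than sending $\epsilon\to 0$ after the split, which is in fact how the paper handles the analogous dependence of $\mathcal L$ on $\mathfrak K$.
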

\begin{proof} 
We apply the following interpolation inequality for the right hand side of \eqref{defn: NN error general},  
\bea\label{interpolation}
&&\mathbb E|\mathsf v_{u_i}(X^n_{\cdot\wedge u_i})-\cU_{u_i}(\pi_m(\sig(X^n)_{u_i};\xi)|^2 \\
&\le&\mathbb E|\mathsf v_{u_i}(X^n_{\cdot\wedge u_i})-\mathcal L(\mathsf{Sig}(X^n)_{u_i})|^2\cdots\cdots \mathcal I \nonumber \\
&&+\mathbb E|\mathcal L(\mathsf{Sig}(X^n)_{u_i})-\mathcal L(\sig(X^n)_{u_i})|^2\cdots\cdots \mathcal{J}\nonumber\\
&&+\mathbb E|\mathcal L(\sig(X^n)_{u_i})-\cU (\sig(X^n)_{u_i};\xi) |^2\cdots\cdots \mathcal{K}\nonumber,
\eea 
where we denote $\mathcal L$ as a linear functional for $\mathsf v_{u_i}$. According to the universal nonlinearity proposition [e.g., \cite{arribas2018derivatives}, see also \cite{kidger2019deep} Proposition A.6], for continuous function $\mathsf v_{u_i}$, and continuous paths $X_{\cdot \wedge u_i}$, for any $\varepsilon_{u_i}^{\cN,\mathsf v,1}>0$,  there exists a linear functional $\mathcal L$ depending on a compact set $\mathfrak K$ (i.e. a compact subset of the space of all geometric p-rough paths defined on $[0,T]$, see e.g. \cite{kalsi2020}[Lemma 3.4]), such that 
\beaa 
\mathcal I=\mathbb E|\mathsf v_{u_i}(X^n_{\cdot\wedge u_i})-\mathcal L(\mathsf{Sig}(X^n)_{u_i})|^2\le \varepsilon_{u_i}^{\cN,\mathsf v,1}.
\eeaa 
Since $\mathcal L$ is a linear functional, applying the remainder term estimates for the signature of the path (e.g. continuous interpolation) generated by $\{X^n_{u_1},\cdots,X^n_{u_{\tilde n}}\}$, we have
\beaa
\mathcal J&=&\mathbb E|\mathcal L(\mathsf{Sig}(X^n)_{u_i})-\mathcal L(\sig(X^n)_{u_i})|^2
\le C_{T,d_1,b,\sigma} (k\Delta t)^{m+1}i:= \varepsilon_{u_i}^{\cN,\mathsf v,\mathsf{Sig},m},
\eeaa 
for $i=1,\cdots, \tilde n$, where  the error   comes from Lemma \ref{lemma: sig error} on $[u_0,u_i]$. In particular, we denote $C_{T,d_1,b,\sigma}$ as the uniform upper bound for the linear functional $\mathcal L$ among all the time step $u_i$ based on the compact set $\mathfrak K$.  Next, applying the universality property \cite{funahashi1993approximation}, for the linear functional $\mathcal L$, for any $ \varepsilon_{u_i}^{\cN,\mathsf v,2}>0$,  there exists a neural network $\mathcal U$, such that 
\beaa 
\mathcal K=\mathbb E|\mathcal L(\sig(X^n)_{u_i})-\cU (\sig(X^n)_{u_i};\xi) |^2\le \varepsilon_{u_i}^{\cN,\mathsf v,2}.
\eeaa 
Combining the above estimates, for a fixed neural network at time $u_i$, we conclude that 
\beaa
\widetilde \varepsilon_{u_i}^{\cN,\mathsf v}\le \varepsilon_{u_i}^{\cN,\mathsf v, 1}+\varepsilon_{u_i}^{\cN,\mathsf v,2}+\varepsilon_{u_i}^{\cN,\mathsf v,\sig}\le \varepsilon_{u_i}^{\cN,\mathsf v}+\varepsilon_{u_i}^{\cN,\mathsf v,\mathsf{Sig}, m},
\eeaa 
where we denote 
\beaa
\varepsilon_{u_i}^{\cN,\mathsf v}:=\varepsilon_{u_i}^{\cN,\mathsf v,1}+\varepsilon_{u_i}^{\cN,\mathsf v,2}.
\eeaa
 \qed 
\end{proof}

\begin{theorem}
Under Assumption \ref{main assumption}, for Algorithm \eqref{alg: state} and FBSDE \eqref{Path FBSDEs non reflection}, there exists a constant $C$ depending on $T$ and Lipschitz constant $L$, such that, 
\begin{equation}
	\begin{split}
&\max_{1\le i\le \tilde n-1} \hE[|Y_{u_i}-\widehat{ \mathcal U}_{u_i}|   ]+\sum_{i=0}^{\tilde n-1}\hE[\int_{u_i}^{u_{i+1}}|Z_t-\widehat{\mathcal Z}_{u_i}|^2dt ]\\
\le &  C\hE|g(X_{\cdot\wedge T})-g(X^n_{\cdot\wedge T})|^2+Ck\Delta t+C\varepsilon^Z+\varepsilon^{\mathsf{Sig},m}+C\sum_{i=0}^{\tilde n-1}(\tilde n \varepsilon_{u_i}^{\cN,\mathsf v}+\varepsilon_{u_i}^{\cN,\mathsf z}).
\end{split}
\end{equation}
{ Here $\varepsilon_{u_i}^{\cN,\mathsf v}$ and $\varepsilon_{u_i}^{\cN,\mathsf z}$ denote the errors at each time step $u_i$ as defined in Lemma \ref{lemma: error interpolation}, and $\varepsilon^{\mathsf{Sig},m}$ denotes the accumulated error from the truncation of the signature at order $m$, with 
$\varepsilon^{\mathsf{Sig},m}\le C_{T,d_1,b,\sigma} \tilde n^4(k\Delta t)^{m+1}$ and $k\Delta t=T/\tilde n$.}
\end{theorem}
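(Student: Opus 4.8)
The plan is to adapt the two-stage strategy of \cite{hure2020deep} to the path-dependent setting, feeding the signature layer into the scheme and controlling the extra truncation error through Lemma~\ref{lemma: sig error} and Lemma~\ref{lemma: error interpolation}. \emph{Stage 1 (time discretization on the coarse grid).} Introduce the auxiliary discrete BSDE on $\{u_i\}_{i=0}^{\tilde n}$: set $\bar Y_{u_{\tilde n}}:=g(X^n_{\cdot\wedge T})$ and, backward in $i$, $\bar Z_{u_i}:=\tfrac1{\Delta u_i}\hE_{u_i}[\bar Y_{u_{i+1}}(\Delta W_{u_i})^\top]$ and $\bar Y_{u_i}:=\hE_{u_i}[\bar Y_{u_{i+1}}]+\Delta u_i\,f(u_i,X^n_{\cdot\wedge u_i},\bar Y_{u_i},\bar Z_{u_i})$ (well-posed for $\Delta u_i$ small since $f$ is Lipschitz). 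The classical $L^2$ error bounds for decoupled BSDEs (\cite{zhang2004numerical}; see also \cite{hure2020deep}), applied with coarse mesh $|\pi|=k\Delta t$, give
\[
\max_{0\le i\le \tilde n}\hE|Y_{u_i}-\bar Y_{u_i}|^2+\sum_{i=0}^{\tilde n-1}\hE\!\int_{u_i}^{u_{i+1}}\!|Z_t-\bar Z_{u_i}|^2\,dt\ \le\ C\big(\hE|g(X_{\cdot\wedge T})-g(X^n_{\cdot\wedge T})|^2+k\Delta t+\varepsilon^Z\big),
\]
so it remains to control $\widehat{\cU}_{u_i}-\bar Y_{u_i}$ and $\widehat{\cZ}_{u_i}-\bar Z_{u_i}$.

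\emph{Stage 2 (backward induction over the learning steps).} Fix $i$ with $\widehat{\cU}_{u_{i+1}}$ given and define the one-step targets
\[
\widetilde Z_{u_i}:=\tfrac1{\Delta u_i}\hE_{u_i}\big[\widehat{\cU}_{u_{i+1}}(\Delta W_{u_i})^\top\big],\qquad \widetilde Y_{u_i}:=\hE_{u_i}\big[\widehat{\cU}_{u_{i+1}}\big]+\Delta u_i\,f\big(u_i,X^n_{\cdot\wedge u_i},\widetilde Y_{u_i},\widetilde Z_{u_i}\big).
\]
Expanding the quadratic loss and using that $\Delta W_{u_i}$ is, conditionally on $\mathcal F_{u_i}$, centered and independent --- hence $L^2$-orthogonal to every $\mathcal F_{u_i}$-measurable square-integrable variable --- one shows that, up to an $O(\Delta u_i)$ term absorbed later by Young's inequality, minimizing $L_{u_i}(\theta)$ in $\theta$ is equivalent to the simultaneous $L^2$-best approximation of $\widetilde Y_{u_i}$ by $\cU_{u_i}^{\cdot,\sig}$ and of $\widetilde Z_{u_i}$ by $\cZ_{u_i}^{\cdot,\sig}$ (the latter weighted by $\Delta u_i$). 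Since, by induction on $i$, $\widehat{\cU}_{u_{i+1}}$ is a continuous functional of the stopped path $X^n_{\cdot\wedge u_{i+1}}$ and $f,g$ are Lipschitz, the maps $X^n_{\cdot\wedge u_i}\mapsto\widetilde Y_{u_i}$ and $X^n_{\cdot\wedge u_i}\mapsto\widetilde Z_{u_i}$ are continuous path functionals; Lemma~\ref{lemma: error interpolation} and its $\mathsf z$-valued analogue then furnish network parameters achieving
\[
\inf_\theta\hE\big|\cU_{u_i}^{\theta,\sig}-\widetilde Y_{u_i}\big|^2\le\varepsilon_{u_i}^{\cN,\mathsf v}+\varepsilon_{u_i}^{\cN,\mathsf v,\mathsf{Sig},m},\qquad \inf_\theta\hE\big|\cZ_{u_i}^{\theta,\sig}-\widetilde Z_{u_i}\big|^2\le\varepsilon_{u_i}^{\cN,\mathsf z}+\varepsilon_{u_i}^{\cN,\mathsf z,\mathsf{Sig},m},
\]
the signature-truncation parts being bounded by $C_{T,d_1,b,\sigma}(k\Delta t)^{m+1}i$ via Lemma~\ref{lemma: sig error} applied on $[u_0,u_i]$. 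Evaluating the expansion at the minimizer $\theta^*$ yields the one-step bound $\hE|\widehat{\cU}_{u_i}-\widetilde Y_{u_i}|^2+\Delta u_i\hE|\widehat{\cZ}_{u_i}-\widetilde Z_{u_i}|^2\le C\big(\varepsilon_{u_i}^{\cN,\mathsf v}+\varepsilon_{u_i}^{\cN,\mathsf z}+\varepsilon_{u_i}^{\cN,\mathsf v,\mathsf{Sig},m}+\varepsilon_{u_i}^{\cN,\mathsf z,\mathsf{Sig},m}\big)$.

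\emph{Stage 3 (stability, discrete Gronwall, summation).} Comparing the discrete BSDEs built respectively from $\widehat{\cU}_{u_{i+1}}$ and from $\bar Y_{u_{i+1}}$, the standard a priori estimate gives $\hE|\widetilde Y_{u_i}-\bar Y_{u_i}|^2+c\,\Delta u_i\hE|\widetilde Z_{u_i}-\bar Z_{u_i}|^2\le(1+C\Delta u_i)\,\hE|\widehat{\cU}_{u_{i+1}}-\bar Y_{u_{i+1}}|^2$. Combining this with the Stage 2 estimate, splitting the cross terms by Young's inequality --- the $\tfrac1{\Delta u_i}$ weight needed to detach $\hE|\widehat{\cU}_{u_i}-\bar Y_{u_i}|^2$ from the $\cZ$-term is precisely what produces the factor $\tilde n$ in front of the $\mathsf v$-errors --- and running a discrete Gronwall inequality backward from $i=\tilde n$ with $\widehat{\cU}_{u_{\tilde n}}-\bar Y_{u_{\tilde n}}=0$, one obtains
\[
\max_{0\le i\le\tilde n}\hE|\widehat{\cU}_{u_i}-\bar Y_{u_i}|^2+\sum_{i=0}^{\tilde n-1}\Delta u_i\,\hE|\widehat{\cZ}_{u_i}-\bar Z_{u_i}|^2\ \le\ C\sum_{i=0}^{\tilde n-1}\big(\tilde n\,\varepsilon_{u_i}^{\cN,\mathsf v}+\varepsilon_{u_i}^{\cN,\mathsf z}\big)+\varepsilon^{\mathsf{Sig},m},
\]
with $\varepsilon^{\mathsf{Sig},m}:=C\sum_{i=0}^{\tilde n-1}\big(\tilde n\,\varepsilon_{u_i}^{\cN,\mathsf v,\mathsf{Sig},m}+\varepsilon_{u_i}^{\cN,\mathsf z,\mathsf{Sig},m}\big)$. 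Inserting $\varepsilon_{u_i}^{\cN,\cdot,\mathsf{Sig},m}\le C_{T,d_1,b,\sigma}(k\Delta t)^{m+1}i$ and summing --- the per-step signature errors reappear in the backward $\max$ over all downstream steps, which supplies the extra power of $\tilde n$ beyond $\sum_{i}\tilde n\,i=O(\tilde n^3)$ --- bounds the accumulated signature term by $\varepsilon^{\mathsf{Sig},m}\le C_{T,d_1,b,\sigma}\tilde n^4(k\Delta t)^{m+1}$. Finally, combining with the Stage 1 bound via the triangle inequality yields the asserted estimate.

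\emph{Main obstacle.} The delicate point is Stage 2: one must check that the implicitly defined one-step maps $X^n_{\cdot\wedge u_i}\mapsto(\widetilde Y_{u_i},\widetilde Z_{u_i})$ are genuinely continuous path functionals whose regularity (Lipschitz modulus, and the compact set of rough paths carrying the linear functionals underlying Lemma~\ref{lemma: error interpolation}) is controlled uniformly in $i$; this requires carrying a uniform-in-$i$ path-regularity estimate for $\widehat{\cU}_{u_{i+1}}$ through the backward induction. The second subtlety is the bookkeeping of the non-uniform, $i$-dependent signature-truncation errors through the Gronwall step so as to land precisely on the stated powers of $\tilde n$.
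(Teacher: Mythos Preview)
Your argument is correct and follows essentially the same route as the paper, which adapts the backward analysis of \cite{hure2020deep} to the path-dependent/signature setting via Lemma~\ref{lemma: error interpolation} and Lemma~\ref{lemma: sig error}. The only organizational difference is that you first introduce the pure time-discretization $(\bar Y_{u_i},\bar Z_{u_i})$ and split $Y-\widehat{\cU}$ through it, whereas the paper works directly with the one-step implicit target $\widehat{\mathcal V}_{u_i}$ (your $\widetilde Y_{u_i}$) and compares $Y_{u_i}$ against it before inserting $\widehat{\cU}_{u_i}$ via Young's inequality; both decompositions lead to the same recursion, the same $\tilde n$ in front of $\varepsilon_{u_i}^{\cN,\mathsf v}$, and the same accumulated signature bound $\varepsilon^{\mathsf{Sig},m}\le C_{T,d_1,b,\sigma}\tilde n^4(k\Delta t)^{m+1}$. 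One small point to tighten: in your Stage~2 one-step estimate you should record explicitly that $\hE|\widehat{\cU}_{u_i}-\widetilde Y_{u_i}|^2\le C(\varepsilon_{u_i}^{\cN,\mathsf v}+\varepsilon_{u_i}^{\cN,\mathsf v,\mathsf{Sig},m}+\Delta u_i(\varepsilon_{u_i}^{\cN,\mathsf z}+\varepsilon_{u_i}^{\cN,\mathsf z,\mathsf{Sig},m}))$ (as in the paper's display preceding the Gronwall step), since it is precisely this $\Delta u_i$ weight that, after multiplication by $\tilde n$ from Young's inequality, leaves $\varepsilon_{u_i}^{\cN,\mathsf z}$ without an extra $\tilde n$ in the final bound.
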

\begin{remark}
We refer to \cite{schmit2020} for the error analysis introduced from the neural networks in the above theorem.
\end{remark}
\begin{proof}
Following \cite[Theorem 4.1]{hure2020deep} for the state-dependent backward implicit scheme, we consider the backward scheme in the current path-dependent setting as below,  
\beaa 
Y_{u_{i}}=\hE_{u_i}[Y_{u_{i+1}} ]+\hE_{u_i}\left[\int_{u_i}^{u_{i+1}} f(t,X_{\cdot\wedge t},Y_t,Z_t)dt\right].
\eeaa 
Furthermore, we could define the implicit scheme, with $k\Delta t_i=\Delta u_i=u_{i+1}-u_i$ and $\Delta W_{u_i}=W_{u_{i+1}}-W_{u_i}$, 
\begin{equation*}
	\begin{split}
\widehat{ \mathcal V }_{u_i}&= \hE_{u_i}[\widehat{\mathcal U}_{u_{i+1}}(\sig(X^n)_{ u_{i+1}})]+f(u_i,X^n_{\cdot\wedge u_i},\widehat{\mathcal V}_{u_i},\overline{\widehat Z}_{u_i}) k\Delta t,\\
{\overline{\widehat Z}_{u_i} }&=\frac{1}{k\Delta t}\hE_{u_i}[\widehat{\mathcal U}_{u_{i+1}}(\sig(X^n)_{u_{i+1}})\Delta W_{u_i} ].
\end{split}
\end{equation*} 
We investigate the convergence analysis for the following quantity, 
\beaa 
\max_{i=0,1,\cdots,\tilde n} \hE|Y_{u_i}-\widehat\cU_{u_i}(\sig(X^n)_{ u_i}) |^2+\hE\Big[\sum_{i=0}^{\tilde n-1}\int_{u_i}^{u_{i+1}}|Z_{t}-\widehat \cZ_{u_i}(\sig(X^n)_{u_i})|^2 {dt} \Big],
\eeaa 
where $\widehat{\mathcal U}_{u_i}(\sig(X^n)_{u_i}) =\cU_{u_i}(\sig(X^n)_{u_i};\theta^*_{\mathcal U})$ and $\widehat\cZ(\sig(X^n)_{u_i})=\cZ(\sig(X^n)_{u_i};\theta^*_{\mathcal Z})$. Due to the Markovian property of the discretized forward process $X^n_{t_i}$, there exists deterministic continuous functions $\mathsf v_{u_i},\mathsf z_{u_i}: \hC([0,T],\hR^d)\rightarrow \hR$  such that,
\bea \label{deterministic function}
\widehat{\mathcal V}_{u_i}=\mathsf v_{u_i}(X^n_{\cdot\wedge u_i}),\quad \overline{\widehat Z}_{u_i}=\mathsf z_{u_i}(X^n_{\cdot\wedge u_i}).
\eea 
By the martingale representation theorem, there exists an $\mathbb R^d$-valued square integrable process $\widehat Z_t$, such that 
\begin{equation}\label{martingale rep} 
	\begin{split}
\widehat{\mathcal U}_{u_{i+1}}(\sig(X^n)_{u_{i+1}}):=&\widetilde{\widehat{\mathcal U}}_{u_{i+1}}(X^n_{\cdot\wedge u_{i+1}})\\
=&\widehat{\mathcal V}_{u_i}-f(u_i,X^n_{\cdot \wedge u_i}, \widehat{\mathcal V}_{u_i}, \overline{\widehat Z}_{u_i})\Delta u_i+\int_{u_i}^{u_{i+1}} \widehat Z^{\ts}dW_s,
	\end{split}
\end{equation}
and we have,
\beaa 
\overline{\widehat Z}_{u_i}=\frac{1}{k\Delta t}\hE_{u_i}[\int_{u_i}^{u_{i+1}}\widehat Z_s ds ],~\text{for}~~ i=1,\cdots, \tilde n.
\eeaa 
According to the implicit scheme above, we have 
\begin{equation}
	\begin{split}
Y_{u_i}-\widehat{\mathcal V}_{u_i}
=&\hE_{u_i}[Y_{u_{i+1}}-\widehat{\mathcal U}_{u_{i+1}}(\sig(X^n)_{ u_{i+1}}) ]\\
&+\hE_{u_i}[\int_{u_i}^{u_{i+1}} f(t,X_{\cdot\wedge t},Y_t,Z_t)-f(u_i,X^n_{\cdot\wedge u_i},\widehat{\mathcal V}_{u_i},\overline{\widehat Z}^n_{u_i}) {dt} ].
\end{split}
\end{equation}
By using the Young inequality, Cauchy-Schwartz inequality, and Assumption \ref{main assumption}, we get
\bea\label{first inq}
&&\hE |Y_{u_i}-\widehat{\mathcal V}_{u_i}|^2
\le (1+\gamma \Delta u_i) \hE \Big|[Y_{u_{i+1}}-\widehat{ \mathcal{U}}_{u_{i+1}}(\sig(X^n)_{u_{i+1}}) ] \Big|^2 +\frac{4L^2(1+\gamma\Delta u_i)}{\gamma} \nonumber\\ 
&&\quad\qq\times \Big\{(\sup_{t\in [u_i,u_{i+1}]} \hE |X_{\cdot \wedge t}-X^n_{\cdot\wedge u_i}|^2)^2 + 2\Delta u_i \hE |Y_{u_i}-\widehat{\mathcal V}_{u_i} |^2+\hE[\int_{u_i}^{u_{i+1}}|Z_t-\overline{\widehat Z}_{u_i}|^2dt ] \Big\}.
\eea
Here the parameter $\gamma$ is a constant to be chosen later. 
Following \cite{platen2010numerical} [Theorem 9.6.2], for some nonnegative function $c(\Delta u_i)$ with $\lim_{\Delta u_i\downarrow 0} c(\Delta u_i)=0$, we obtain 
\bea \label{path estiamte}
\sup_{t\in [u_i,u_{i+1}]} \hE |X_{\cdot \wedge t}-X^n_{\cdot\wedge u_i}|^2\le C(\Delta u_i+c(\Delta u_i)).
\eea 
 The estimate of $ (\sup_{t\in [u_i,u_{i+1}]} \hE |X_{\cdot \wedge t}-X^n_{\cdot\wedge u_i}|^2)^2$ is the first difference here compared to the state-dependent FBSDE setting. For $\Delta u_i$ small enough, choosing $\gamma=8dL^2$,  following \textit{Step 1} in the proof of \cite{hure2020deep}[Theorem 4.1] for the backward implicit Euler scheme, 
\bea \label{est 412}
&&\hE |Y_{u_i}-\widehat{\mathcal V}_{u_i}|^2\le (1+C| \Delta u_i|) \hE \Big|[Y_{u_{i+1}}-\widehat{ \mathcal{U}}_{u_{i+1}}(\sig(X^n)_{u_{i+1}}) ] \Big|^2+C(\Delta u_i+c(\Delta u_i))^2\nonumber \\
&&\qq\qq\qq+C\hE[\int_{u_i}^{u_{i+1}}|Z_t-\overline{\widehat Z}_{u_i}|^2dt ]+C|\Delta u_i|\hE[\int_{u_i}^{u_{i+1}}|f(t,X_t,Y_t,Z_t)|^2dt ].
\eea 
Applying the Young inequality, i.e. $(a+b)^2\ge (1-|\Delta u_i|)a^2+(1-\frac{1}{|\Delta u_i})b^2\ge (1-|\Delta u_i|)a^2-\frac{1}{|\Delta u_i|}b^2,$ we have 
\begin{equation}\label{est 413}
	\begin{split}
\hE|Y_{u_i}-\widehat{\mathcal V}_{u_i}|^2 =&\hE |Y_{u_i}-\widehat{\mathcal U}_{u_i}(\sig(X^n)_{u_i})+\widehat{\mathcal U}_{u_i}(\sig(X^n)_{u_i})-\widehat{\mathcal V}_{u_i}|^2 \\
\ge & (1-|\Delta u_i|)\hE|Y_{u_i}-\widehat{\mathcal U}_{u_i}(\sig(X^n)_{u_i})|^2-\frac{1}{|\Delta u_i|}\hE|\widehat{\mathcal U}_{u_i}(\sig(X^n)_{u_i})- \widehat{\mathcal V}_{u_i}|^2.\nonumber
	\end{split}
\end{equation}
Plugging  estimate \eqref{est 413} into estimate \eqref{est 412}, for $|\Delta u_i|=k\Delta t$ small enough, we have 
\begin{equation}\label{second inq}
\begin{split}
&\hE|Y_{u_i}-\widehat{\mathcal U}_{u_i}(\sig(X^n)_{u_i}) |^2\nonumber \\
\le& (1+Ck\Delta t) \hE|Y_{u_{i+1}}-\widehat{\mathcal U}_{u_{i+1}}(\sig(X^n)_{u_{i+1}}) |^2+C(k\Delta t+c(k\Delta t))^2\nonumber \\
&+C\hE[\int_{u_i}^{u_{i+1}}|Z_t-\bar Z_{u_i}|^2dt ]+Ck\Delta t\hE[\int_{u_i}^{u_{i+1}}|f(t,X_t,Y_t,Z_t)|^2dt ]\nonumber \\
&+C \tilde n \hE|\widehat{\mathcal V}_{u_i}-\widehat{\mathcal U}_{u_i}(\sig(X^n)_{u_i}) |^2.
\end{split}
\end{equation}
Applying the Gronwall's inequality, and using the error $\varepsilon^Z$ for the $L^2$-regularity of $Z$ and the fact $\hE[\int_0^T f(t,X_{\cdot\wedge t},Y_t,Z_t)^2dt]<\infty$, and recalling the fact $Y_{u_{\tilde n}}=g(X_{\cdot\wedge T})$, and $\widehat{\mathcal U}_{u_{\tilde n}}(\sig(X^n)_{T})=g(X_{\cdot\wedge T})$, we have
\begin{equation}\label{label: estimates 1}
\begin{split}
&\max_{i=0,1,\cdots,\tilde n-1} \hE|Y_{u_i}-\widehat{\mathcal U}_{u_i}(\sig(X^n)_{ u_i}) |\\
&\le C\hE|g(X_{\cdot\wedge T})-g(X^n_{\cdot\wedge T})|^2+Ck\Delta t+C\varepsilon^Z+C\tilde n\sum_{i=0}^{\tilde n-1}\hE|\widehat{\mathcal V}_{u_i}-\widehat{\mathcal U}_{u_i}(\sig(X^n)_{\wedge u_i}) |^2,
\end{split}
\end{equation} 
where we use the estimate $C((k\Delta t)+c(k\Delta t))^2<Ck\Delta t$ when $k\Delta t$ is small enough. Now, we are left to estimate  $C\tilde n\sum_{i=0}^{\tilde n-1}\hE|\widehat{\mathcal V}_{u_i}-\widehat{\mathcal U}_{u_i}(X^n_{\cdot\wedge u_i}) |^2$,  which is the main difference compared to \textit{Step 3} and \textit{Step 4} in \cite{hure2020deep}[Theorem 4.1]. Following \eqref{deterministic function},  we first define the following errors at each time step $u_i$, for $i=1,\cdots,\tilde n$.
\begin{equation}\label{defn: NN error}
	\begin{split}
\widetilde \varepsilon_{u_i}^{\cN,\mathsf v}&:=\inf _{\xi}\hE|\mathsf v_{u_i}(X^n_{\cdot\wedge u_i}) -\cU_{u_i}(\sig(X^n)_{u_i};\xi)|^2; \\
\widetilde  \varepsilon_{u_i}^{\cN,\mathsf z}&:= \inf_{\eta}\hE|\mathsf z_{u_i}(X^n_{\cdot\wedge u_i})-\cZ_{u_i}(\sig(X^n)_{u_i};\eta)|^2.
	\end{split}
\end{equation} 
We then decompose the error in \eqref{alg: state} as $L_{u_i}(\theta):= \widetilde L_{u_i}+\hE[\int_{u_i}^{u_{i+1}}|\widehat Z_t-\overline{\widehat Z} |^2 dt ]$ for $\theta=(\xi,\eta)$, where we define 
\begin{equation}\label{loss}
	\begin{split}
\widetilde L_{u_i}(\theta):=&\Delta u_i\hE|\overline{\widehat Z}_{u_i}-\mathcal Z_{u_i}(\sig(X^n)_{u_i};\eta) |^2\nonumber \\
&+\hE\Big|\widehat{\mathcal V}_{u_i}- {\cU_{u_i}(\sig(X^n)_{u_i};\xi) } +\Big[f(u_i,X_{\cdot\wedge u_i},\mathcal U_{u_i}(\sig(X^n)_{u_i}),\mathcal Z_{u_i}(\sig(X^n)_{u_i}) )\nonumber \\
&\qq\qq\qq\qq -f(u_i,X_{\cdot\wedge u_i},\widehat{\mathcal V}_{u_i},\overline{\widehat Z}_{u_i}) \Big]\Delta u_i \Big|^2.
	\end{split}
\end{equation}
In the following, we show that the errors defined above are indeed small at each time $u_i$.  We first observe that,
\begin{equation}\label{error upper bound}
	\begin{split}
\widetilde L_{u_i}(\theta)&\le (1+C\Delta u_i)\hE|\widehat{\mathcal V}_{u_i}-\mathcal U_{u_i}(\sig(X^n)_{u_i};\xi) |^2\\
&+C\Delta u_i\hE|\overline{\widehat Z}_{u_i}-\mathcal Z_{u_i}(\sig(X^n)_{u_i};\eta) |^2. \\
	\end{split}
\end{equation}
Furthermore, applying the Young inequality: $(a+b)^2\ge (1-\gamma\Delta u_i)a^2+(1-\frac{1}{\gamma\Delta u_i})b^2\ge (1-\gamma\Delta u_i)a^2-\frac{1}{\gamma\Delta u_i}b^2$, we have 
\begin{equation*}
	\begin{split}
\widetilde L_{u_i}(\theta)&\ge\Delta u_i \hE|\overline{\widehat Z}_{u_i}-\mathcal Z_{u_i}(\sig(X^n)_{u_i};\eta) |^2+ (1-\gamma\Delta u_i)\hE|\widehat{\mathcal V}_{u_i}-\mathcal U_{u_i}(\sig(X^n)_{u_i};\xi) |^2\\
&-\frac{2\Delta u_i L^2}{\gamma} \Big(\hE|\widehat{\mathcal V}_{u_i}-\mathcal U_{u_i}(\sig(X^n)_{u_i};\xi) |^2+ \Delta u_i\hE|\overline{\widehat Z}_{u_i}-\mathcal Z_{u_i}(\sig(X^n)_{u_i};\eta) |^2 \Big).
\end{split}
\end{equation*}
Let $\gamma=4L^2$, we get 
\beaa
\widetilde L_{u_i}(\theta)\ge (1-C\Delta u_i)\hE|\widehat{\mathcal V}_{u_i}-\mathcal U_{u_i}(\sig(X^n)_{u_i};\xi) |^2+\frac{\Delta u_i}{2}\hE|\overline{\widehat Z}_{u_i}-\mathcal Z_{u_i}(\sig(X^n)_{u_i};\eta) |^2.
\eeaa
Combining with \eqref{error upper bound}, we observe that 
\begin{equation*}
	\begin{split}
&(1-C\Delta u_i)\hE|\widehat{\mathcal V}_{u_i}-\mathcal U_{u_i}(\sig(X^n)_{u_i};\xi) |^2+\frac{\Delta u_i}{2}\hE|\overline{\widehat Z}_{u_i}-\mathcal Z_{u_i}(\sig(X^n)_{u_i};\eta) |^2\le \widetilde L_{u_i}(\theta^*)\\
\le & \widetilde L_{u_i}(\theta)\le (1+C\Delta u_i)\underbrace{\hE|\widehat{\mathcal V}_{u_i}-\mathcal U_{u_i}(\sig(X^n)_{u_i};\xi) |^2}_{\mathbf I}+C\Delta u_i\underbrace{ \hE|\overline{\widehat Z}_{u_i}-\mathcal Z_{u_i}(\sig(X^n)_{u_i};\eta) |^2}_{\mathbf J}.
	\end{split}
\end{equation*}

{Applying Lemma \ref{lemma: error interpolation} for the errors defined in \eqref{defn: NN error} or equivalently the terms $\mathbf I$ and $\mathbf J$ above},  we conclude that 
\bea\label{new error v}
\widetilde \varepsilon_{u_i}^{\cN,\mathsf v}\le \varepsilon_{u_i}^{\cN,\mathsf v, 1}+\varepsilon_{u_i}^{\cN,\mathsf v,2}+\varepsilon_{u_i}^{\cN,\mathsf v,\sig}\le \varepsilon_{u_i}^{\cN,\mathsf v}+\varepsilon_{u_i}^{\cN,\mathsf v,\mathsf{Sig}, m},
\eea 
where we denote 
\bea
\varepsilon_{u_i}^{\cN,\mathsf v}:=\varepsilon_{u_i}^{\cN,\mathsf v,1}+\varepsilon_{u_i}^{\cN,\mathsf v,2}.
\eea

Similarly, we obtain the same estimates for $\widetilde \varepsilon_{u_i}^{\cN,\mathsf z}$ according to Lemma \ref{defn: NN error general}, namely,
\bea\label{new error z}
\widetilde  \varepsilon_{u_i}^{\cN,\mathsf z}\le \varepsilon_{u_i}^{\cN,\mathsf z, 1}+\varepsilon_{u_i}^{\cN,\mathsf z,2}+\varepsilon_{u_i}^{\cN,\mathsf z,\sig}\le \varepsilon_{u_i}^{\cN,\mathsf z}+\varepsilon_{u_i}^{\cN,\mathsf z,\mathsf{Sig},m}.
\eea 
In particular, for each fixed $u_i$, we apply Lemma \ref{lemma: sig error} to get the upper bounds of $\varepsilon_{u_i}^{\mathcal N, \mathsf v, \mathsf{Sig}, m}$ and $\varepsilon_{u_i}^{\mathcal N, \mathsf z, \mathsf{Sig},m}$. 
From now on, we will refer to 
\bea \label{explicit error}
\varepsilon_{u_i}^{\cN,\mathsf v},\quad \varepsilon_{u_i}^{\cN,\mathsf z},\quad \varepsilon_{u_i}^{\cN,\mathsf v,\mathsf{Sig},m},\quad \varepsilon_{u_i}^{\cN,\mathsf z,\mathsf{Sig},m},
\eea
as the error for \eqref{defn: NN error} from neural network and the signature layer. For $\Delta u_i$ small enough, we end up with
\bea\label{nn error euler}
&&\hE|\widehat{\mathcal V}_{u_i}-\mathcal U_{u_i}(\sig(X^n)_{u_i};\xi) |^2+\Delta u_i\hE|\overline{\widehat Z}_{u_i}-\mathcal Z_{u_i}(\sig(X^n)_{u_i};\eta) |^2\\
&&\le C( \varepsilon_{u_i}^{\mathcal N,\mathsf v}+\varepsilon_{u_i}^{\mathcal N,\mathsf v, \mathsf{Sig},m}+\Delta u_i (\varepsilon_{u_i}^{\mathcal N,\mathsf z}+\varepsilon_{u_i}^{\mathcal N,\mathsf z, \mathsf{Sig},m} ) ).\nonumber  
\eea
 Applying the error defined in \eqref{defn: NN error} and \eqref{explicit error} to \eqref{label: estimates 1}, we have
\begin{equation*}
	\begin{split}
&\max_{i=0,1,\cdots,\tilde n-1} \hE|Y_{u_i}-\widehat{\mathcal U}_{u_i}(\sig(X)_{u_i}) |\\
&\le   C\hE|g(X_{\cdot\wedge T})-g(X_{\cdot\wedge T}^n)|^2+Ck\Delta t+C\varepsilon^Z\\
&+C\tilde n\sum_{i=0}^{\tilde n-1}(\tilde n (\varepsilon_{u_i}^{\cN,\mathsf v}+\varepsilon_{u_i}^{\cN,\mathsf v, \mathsf{Sig},m})+\varepsilon_{u_i}^{\cN,\mathsf z}+\varepsilon_{u_i}^{\cN,\mathsf z, \mathsf{Sig},m})\\
&\le   C\hE|g(X_{\cdot\wedge T})-g(X_{\cdot\wedge T}^n)|^2+Ck\Delta t+C\varepsilon^Z+\varepsilon^{\mathsf{Sig},m}+C\tilde n\sum_{i=0}^{\tilde n-1}(\tilde n \varepsilon_{u_i}^{\cN,\mathsf v}+\varepsilon_{u_i}^{\cN,\mathsf z}),
	\end{split}
\end{equation*} 
 where we denote $\varepsilon^{\mathsf{Sig},m}$ as the total accumulated error from the signature truncation, and it has the following upper bound according to Lemma \ref{lemma: sig error},
\bea\label{cummulated sig error}
\varepsilon^{\mathsf{Sig},m}=C\tilde n\sum_{i=0}^{\tilde n-1}(\tilde n \varepsilon_{u_i}^{\cN,\mathsf v, \mathsf{Sig},m}+\varepsilon_{u_i}^{\cN,\mathsf z, \mathsf{Sig},m})\le C_{T,d_1,b,\sigma} (k\Delta t)^{m+1}\tilde n^4.
\eea
The estimates for $\hE[\int_{u_i}^{u_{i+1}}|Z_t-\widehat{\mathcal Z}_{u_i}|^2dt ]$ follows from the following observation, 
\begin{equation} 
\begin{split}
\hE[\int_{u_i}^{u_{i+1}}|Z_t-\widehat{\mathcal Z}_{u_i}|^2dt ]&\le 2\hE[\int_{u_i}^{u_{i+1}}|Z_t-\overline{\widehat Z}_{u_i}|^2dt ]+2\Delta u_i\hE|\overline{\widehat Z}_{u_i}-\widehat{\mathcal Z}_{u_i}(\sig(X)_{u_i}) |^2\\
&=:\mathbf J_1+\mathbf J_2.
\end{split}
\end{equation} 
The estimate of $\mathbf J_2$ follows from error \eqref{defn: NN error}, the estimate of $\mathbf J_1$ is similar to \textit{Step 5} in \cite{hure2020deep}[Theorem 4.1] by using our neural network error defined in \eqref{defn: NN error},
\begin{equation}
	\begin{split}
\sum_{i=0}^{\tilde n-1}\hE[\int_{u_i}^{u_{i+1}}|Z_t-\overline{\widehat Z}_{u_i}|^2dt ] &\le C\hE|g(X_{\cdot\wedge T})-g(X_{\cdot\wedge T}^n)|^2+Ck\Delta t+C\varepsilon^Z+\varepsilon^{\mathsf{Sig},m}\\&+C\sum_{i=0}^{\tilde n-1}(\tilde n \varepsilon_{u_i}^{\cN,\mathsf v}+\varepsilon_{u_i}^{\cN,\mathsf z}).
	\end{split}
\end{equation}
Combining the above two estimate, we complete the proof.
 \qed 
\end{proof}

\subsection{Convergence Analysis: reflected FBSDE}
We are now ready to present the convergence analysis for Algorithm \ref{alg:the_alg}, which produces the following scheme, 
\begin{equation}
\label{RBSDE-1}
\begin{cases}
\widetilde{\mathcal V}_{u_i}= \hE_{u_i}[\widehat{\mathcal U}_{u_{i+1}}(\sig(X^n)_{ u_{i+1}})]+f(u_i,X^n_{\cdot\wedge u_i},\widetilde{\mathcal V}_{u_i},\overline{\widetilde Z}_{u_i})\Delta u_i,\\
{\overline{\widetilde Z}_{u_i}}=\frac{1}{k\Delta t}\hE_{u_i}[\widehat{\mathcal U}_{u_{i+1}}(\sig(X^n)_{ u_{i+1}})\Delta W_{u_i} ],\\
\widehat{\mathcal V}_{u_i}:=\max[\widetilde{\mathcal V}_{u_i},g(u_i,X^n_{\cdot\wedge u_i}) ]
\end{cases}
\end{equation}
Similar to Lemma \ref{defn: NN error general}, and 
\eqref{defn: NN error}, \eqref{new error v}, \eqref{new error z}, we consider the error below,
\begin{equation}
\begin{split}\label{defn: NN error reflection}
\widetilde \varepsilon_{u_i}^{\cN,\tilde{ \mathsf v}}:=\inf _{\xi}\hE|\tilde{\mathsf v}_{u_i}(X^n_{\cdot\wedge u_i}) -\cU_{u_i}(\sig(X^n)_{u_i};\xi)|^2=\varepsilon^{\mathcal N, \tilde{\mathsf v}}_{u_i}+\varepsilon^{\mathcal N, \tilde{\mathsf v}, \mathsf{Sig},m}_{u_i}; \\
\widetilde \varepsilon_{u_i}^{\cN,\tilde{\mathsf z}}:= \inf_{\eta}\hE|\tilde{\mathsf z}_{u_i}(X^n_{\cdot\wedge u_i})-\cZ_{u_i}(\sig(X^n)_{u_i};\eta)|^2=\varepsilon^{\mathcal N, \tilde{\mathsf z}}_{u_i}+\varepsilon^{\mathcal N, \tilde{\mathsf z}, \mathsf{Sig},m}_{u_i},
\end{split}
\end{equation}
which could be bounded as shown in \eqref{defn: NN error}.

\begin{theorem}
Let Assumption \ref{main assumption} hold. For Algorithm \eqref{alg:the_alg} and reflected FBSDE \eqref{Path FBSDEs}, there exists a constant $C>0$ depending on $T$ and Lipschitz constant $L$, such that, 
\begin{equation}
\begin{split}
&\max_{0\le i\le \tilde n-1} \hE[|Y_{u_i}-\widehat{ \mathcal U}_{u_i}(\sig(X^n)_{u_i})|   ]+\sum_{i=0}^{\tilde n-1}\hE[\int_{u_i}^{u_{i+1}}|Z_t-\widehat{\mathcal Z}_{u_i}(\sig(X^n)_{u_i})|^2dt ]\\
\le&   C(k\Delta t + \sum_{i=0}^{\tilde n-1}(\tilde n \varepsilon_{u_i}^{\cN,\tilde{\mathsf v}}+\varepsilon_{u_i}^{\cN,\tilde{\mathsf z}}))+\varepsilon^{\mathsf{Sig},m}.	
\end{split}
\end{equation}
Here $\varepsilon_{u_i}^{\cN,\tilde{\mathsf v}}$ and $\varepsilon_{u_i}^{\cN,\tilde{\mathsf z}}$ denotes the error introduced in \eqref{defn: NN error reflection} at time $u_i$ from the neural network approximation, and $\varepsilon^{\mathsf{Sig},m}$ denotes the accumulated error from the truncation of the signature at order $m$, with 
$\varepsilon^{\mathsf{Sig},m}\le C_{T,d_1,b,\sigma} \tilde n^4(k\Delta t)^{m+1}$ and $k\Delta t=T/\tilde n$.
\end{theorem}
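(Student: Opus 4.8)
The plan is to run the argument of the non-reflected theorem above — itself an adaptation of \cite[Theorem~4.1]{hure2020deep} — while incorporating the reflection exactly as in \cite[Theorem~4.2]{hure2020deep}. The only genuinely new ingredients are the path-dependent perturbation term, already controlled by \eqref{path estiamte}, and the signature-truncation error of Lemma~\ref{lemma: sig error} propagated through Lemma~\ref{lemma: error interpolation}. The structural fact that makes the reflected case collapse onto the non-reflected estimates is that $a\mapsto\max(a,c)$ is nonexpansive: $|\max(a,c)-\max(b,c')|\le|a-b|+|c-c'|$. Since the update step of Algorithm~\eqref{alg:the_alg} and the scheme \eqref{RBSDE-1} reflect against the \emph{same} obstacle $g(u_i,X^n_{\cdot\wedge u_i})$, this lets me strip the outer $\max$ in every squared difference, at the price of a Lipschitz-in-path obstacle-discretization error and a Skorokhod remainder, both of order $O(k\Delta t)$ under Assumption~\ref{main assumption}.

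\textbf{Step 1 (one-step estimate for the reflected scheme).} I would mirror \eqref{first inq}--\eqref{est 412} with $(Y_{u_i},\widehat{\mathcal V}_{u_i})$ in place of their non-reflected analogues, using the backward form $Y_{u_i}=\max\bigl(\hE_{u_i}[Y_{u_{i+1}}]+\hE_{u_i}\bigl[\int_{u_i}^{u_{i+1}}f(t,X_{\cdot\wedge t},Y_t,Z_t)\,dt\bigr],\,g(u_i,X_{\cdot\wedge u_i})\bigr)+\rho_{u_i}$ of the continuous reflected BSDE \eqref{Path FBSDEs}, where the Skorokhod remainder $\rho_{u_i}$ has accumulated size $O(k\Delta t)$ under Lipschitz data (cf. \cite{ma2005representations} and the proof of \cite[Theorem~4.2]{hure2020deep}, with $\varepsilon^Z=O(k\Delta t)$ from \cite{zhang2004numerical} as one input). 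Peeling off the $\max$ against \eqref{RBSDE-1} and applying Young's and Cauchy--Schwarz inequalities together with Assumption~\ref{main assumption} yields, after absorbing the squared path-increment term $\bigl(\sup_t\hE|X_{\cdot\wedge t}-X^n_{\cdot\wedge u_i}|^2\bigr)^2$ via \eqref{path estiamte}, a recursion of the form $\hE|Y_{u_i}-\widehat{\mathcal V}_{u_i}|^2\le(1+Ck\Delta t)\,\hE|Y_{u_{i+1}}-\widehat{\mathcal U}_{u_{i+1}}(\sig(X^n)_{u_{i+1}})|^2+C(k\Delta t+c(k\Delta t))^2+C\,\hE\int_{u_i}^{u_{i+1}}|Z_t-\overline{\widetilde Z}_{u_i}|^2\,dt+Ck\Delta t\,\hE\int_{u_i}^{u_{i+1}}|f(t,X_{\cdot\wedge t},Y_t,Z_t)|^2\,dt+C\tilde n\,\hE|\widehat{\mathcal V}_{u_i}-\widehat{\mathcal U}_{u_i}(\sig(X^n)_{u_i})|^2+C\,\hE|\rho_{u_i}|^2$, which is identical to the non-reflected recursion \eqref{est 412} up to the extra obstacle and Skorokhod contributions.

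\textbf{Step 2 (loss analysis, Gronwall, and the $Z$-bound).} The loss $\widehat L_{u_i}$ in \eqref{alg:the_alg} has exactly the form of the non-reflected loss, the $\max$ entering only the update; hence the two-sided bound on $\widehat L_{u_i}(\theta^*)$ together with Young's inequality, as in the derivation of \eqref{nn error euler}, gives $\hE|\widetilde{\mathcal V}_{u_i}-\cU_{u_i}(\sig(X^n)_{u_i};\theta^*)|^2+\Delta u_i\,\hE|\overline{\widetilde Z}_{u_i}-\cZ_{u_i}(\sig(X^n)_{u_i};\theta^*)|^2\le C\bigl(\widetilde\varepsilon_{u_i}^{\cN,\tilde{\mathsf v}}+\Delta u_i\,\widetilde\varepsilon_{u_i}^{\cN,\tilde{\mathsf z}}\bigr)$, and by nonexpansiveness of $\max$ the same bound controls $\hE|\widehat{\mathcal V}_{u_i}-\widehat{\mathcal U}_{u_i}(\sig(X^n)_{u_i})|^2$. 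Lemma~\ref{lemma: error interpolation} then splits each $\widetilde\varepsilon$ of \eqref{defn: NN error reflection} into a neural-network part ($\varepsilon_{u_i}^{\cN,\tilde{\mathsf v}}$, resp. $\varepsilon_{u_i}^{\cN,\tilde{\mathsf z}}$) and a signature-truncation part bounded, via Lemma~\ref{lemma: sig error}, by $C_{T,d_1,b,\sigma}(k\Delta t)^{m+1}i$. Plugging these into the recursion of Step~1 and applying discrete Gronwall over $i=\tilde n-1,\dots,0$, with the terminal identity $Y_{u_{\tilde n}}=\widehat{\mathcal U}_{u_{\tilde n}}=g(X_{\cdot\wedge T})$, gives the bound on $\max_i\hE|Y_{u_i}-\widehat{\mathcal U}_{u_i}(\sig(X^n)_{u_i})|$; here $\hE|g(X_{\cdot\wedge T})-g(X^n_{\cdot\wedge T})|^2$, $\varepsilon^Z$, $C(k\Delta t+c(k\Delta t))^2$, $C\sum_i\hE|\rho_{u_i}|^2$ and the obstacle-discretization terms are all $O(k\Delta t)$ and absorbed into $Ck\Delta t$, while the accumulated signature error collects as in \eqref{cummulated sig error} to $\varepsilon^{\mathsf{Sig},m}\le C_{T,d_1,b,\sigma}\tilde n^4(k\Delta t)^{m+1}$. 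The bound on $\sum_i\hE\int_{u_i}^{u_{i+1}}|Z_t-\widehat{\mathcal Z}_{u_i}(\sig(X^n)_{u_i})|^2\,dt$ then follows, exactly as in the non-reflected proof, from the split into $\mathbf J_1=2\hE\int_{u_i}^{u_{i+1}}|Z_t-\overline{\widetilde Z}_{u_i}|^2\,dt$ (summed through Gronwall) and $\mathbf J_2=2\Delta u_i\,\hE|\overline{\widetilde Z}_{u_i}-\widehat{\mathcal Z}_{u_i}(\sig(X^n)_{u_i})|^2$ (controlled by \eqref{defn: NN error reflection}).

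\textbf{Main obstacle.} The delicate part is carrying the reflection through the backward induction without inflating constants: one must verify at each step that the outer $\max$ against the discretized obstacle is neutralized by nonexpansiveness and that the resulting obstacle-discretization error and Skorokhod remainder enter additively at rate $O(k\Delta t)$, rather than being amplified by the factor $\tilde n$ that multiplies $\hE|\widehat{\mathcal V}_{u_i}-\widehat{\mathcal U}_{u_i}(\sig(X^n)_{u_i})|^2$ in the recursion. A secondary subtlety is that the one-step discretization of the continuous reflected BSDE attains rate $O(k\Delta t)$ only because $g$ is Lipschitz, which forces $\varepsilon^Z=O(k\Delta t)$; this is exactly where Assumption~\ref{main assumption} is used, as in \cite{el1997reflected,ma2005representations,hure2020deep}.
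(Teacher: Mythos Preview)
Your Step~2 (loss analysis, the two-sided bound on $\widetilde L_{u_i}$, and the $\mathbf J_1/\mathbf J_2$ split for $Z$) tracks the paper closely, and the way you feed Lemma~\ref{lemma: error interpolation} and Lemma~\ref{lemma: sig error} into the recursion to produce $\varepsilon^{\mathsf{Sig},m}$ is exactly right. But Step~1 is organized differently from the paper, and the difference is precisely the ``main obstacle'' you flag.

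The paper does \emph{not} compare the continuous $Y$ directly to $\widehat{\mathcal U}$. It inserts the fully discrete reflected scheme $(Y^n_{u_i},Z^n_{u_i})$ of \eqref{RBSDE-2} and splits via triangle inequality: the piece $\max_i\hE|Y_{u_i}-Y^n_{u_i}|^2=\mathcal O(k\Delta t)$ and the $Z$-analogue are taken as a black box from the reflected-BSDE discretization literature \cite{hamadene2010switching,chassagneux2019rate,memin2008convergence}, and the recursion runs only on $\hE|\widetilde Y^n_{u_i}-\widetilde{\mathcal V}_{u_i}|^2$. Because $Y^n$ and $\widehat{\mathcal U}$ are both discrete and reflect against the \emph{same} obstacle, $|\max(a,c)-\max(b,c)|\le|a-b|$ applies with zero residual: no Skorokhod remainder, no obstacle-discretization term, no path-increment term appears in that comparison. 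The resulting recursion is strictly cleaner than the non-reflected \eqref{est 412}; in fact the paper uses the identity $\Delta u_i\hE|Z^n_{u_i}-\overline{\widetilde Z}_{u_i}|^2\le 2d\bigl(\hE|Y^n_{u_{i+1}}-\widehat{\mathcal U}_{u_{i+1}}|^2-\hE|\hE_{u_i}[\,\cdot\,]|^2\bigr)$ to absorb the $Z$-term entirely into the $Y$-recursion, leaving only the $C\tilde n$-weighted NN/signature error.

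Your direct route hinges on writing $Y_{u_i}=\max\bigl(\hE_{u_i}[Y_{u_{i+1}}]+\hE_{u_i}\!\int f\,dt,\,g(u_i,X_{\cdot\wedge u_i})\bigr)+\rho_{u_i}$. This is not a standard identity: from the reflected equation one only gets $0\le\rho_{u_i}\le\hE_{u_i}[K_{u_{i+1}}-K_{u_i}]$, and converting $\sum_i\hE|\rho_{u_i}|^2$ into $O(k\Delta t)$ requires exactly the increment regularity of $K$ that underlies the discretization estimate \eqref{RBSDE euler est}. The references you invoke give the bound on $|Y-Y^n|$, not this pointwise decomposition, so in effect you are re-deriving the black box inline. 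If you instead make the intermediate scheme $Y^n$ explicit, your ``main obstacle'' evaporates: the $\tilde n$-amplified term then multiplies only $\hE|\widetilde{\mathcal V}_{u_i}-\mathcal U_{u_i}(\theta^*)|^2$, the reflection adds nothing, and the Skorokhod and obstacle terms never enter the induction.
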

\begin{proof}
We first introduce the discrete-time approximation of the path-dependent reflected BSDE,
\begin{equation}\label{RBSDE-2}
\begin{cases}
Y^n_{u_{\tilde n}}=g(X^n_{\cdot\wedge u_{\tilde n}}),\\
Z_{u_i}^n=\frac{1}{\Delta u_i}\hE_{u_i}[Y^n_{u_{i+1}}\Delta W_{u_i}]\\
\widetilde Y^n_{u_i}=\hE_{u_i}[Y^n_{u_{i+1}}]+f(u_i,X_{\cdot\wedge u_i},\widetilde Y_{u_i}^n,Z^n_{u_i})\Delta u_i,\\
Y_{u_i}^n=\max[\widetilde Y_{u_i}^n, g(X_{\cdot\wedge u_i})],~~i=0,\cdots,\tilde n-1.
\end{cases}
\end{equation}
We first need the Euler scheme estimate for the path-dependent reflected BSDE.  According to the well-posedness for a general oblique reflected BSDE in \cite{hamadene2010switching}[Section 2.2], in our current situation, $Y_t$ is one-dimensional and the path-dependence of the generator $f$ on $X_{\cdot\wedge t}$ does not affect the measurability assumption in \cite{hamadene2010switching}[Section 2.2], the existence of the RBSDE follows directly.  The rate of convergence  follows similar to the ones in \cite{chassagneux2019rate}, where they derive the convergence for relaxed condition on the coefficient in the state-dependent case. The dependence on the path of $X$ will not introduce extra difficulty. To be more precise, adding the path-dependence of the generator $f$ and terminal condition $g$ on $(X_{\cdot\wedge t})_{0\le t\le T}$, we can update the discretized scheme in \cite{hamadene2010switching}[equation (3.2)] (or \cite{chassagneux2019rate}[equation (1.5)] and \cite{memin2008convergence}), as long as $f$ and $g$ are Lipschitz continuous (Assumption \eqref{main assumption}) for all the variables and $X$ itself is Markovian. Combining with the estimates in \eqref{path estiamte}, we get the similar estimates as below, 
\bea\label{RBSDE euler est}
\max_{i=0,\cdots, \tilde n-1}\hE|Y_{u_i}-Y^n_{u_i}|^2=\mathcal O(k\Delta t),\qq 
\hE[\sum_{i=0}^{\tilde n-1}\int_{u_i}^{u_{i+1}} |Z_t-Z_{u_i}^n |^2 dt ]=\mathcal O(\sqrt{k\Delta t}).
\eea
The rest of the proof follows closely to \cite{hure2020deep}[Theorem 4.4] after the following adjustment, 
\bea
\widehat{\mathcal U}_{u_i}(\sig(X^n)_{u_i})=\max[\mathcal U(\sig(X^n)_{u_i},\xi^*),g(u_i,X_{\cdot\wedge u_i}) ].
\eea 
First, we observe that 
\beaa
\widetilde Y_{u_i}^n-\widetilde{\mathcal V}_{u_i}
=\hE_{u_i}[Y_{u_{i+1}}^n-\widehat{\mathcal U}_{u_{i+1}}(\sig(X^n)_{ u_{i+1}}) ]+\Delta u_i( f(u_i,X_{\cdot\wedge u_i},\widetilde Y_{u_i}^n,Z^n_{u_i})-f(u_i,X^n_{\cdot\wedge u_i},\widetilde{\mathcal V}_{u_i},\overline{\widetilde Z}^n_{u_i}) ),
\eeaa
for $i=0,1\cdots, \tilde n-1$. Similar to \eqref{first inq}, we further get 
\begin{equation}
\label{first inq-variation}
\begin{split}
\hE|\widetilde Y_{u_i}^n-\widetilde{\mathcal V}_{u_i}|^2\le&  (1+\gamma \Delta u_i)\hE|\hE_{u_i}[Y^n_{u_{i+1}}-\widehat{\mathcal U}_{u_{i+1}}(\sig(X^n)_{ u_{i+1}})  ] |^2 \\
&+2\frac{L^2}{\gamma}(1+\gamma\Delta u_i) [\Delta u_i \hE|\widetilde Y_{u_i}^n-\widetilde{\mathcal V}_{u_i} |^2+\Delta u_i \hE|Z^n_{u_i}-\overline{\widetilde Z}_{u_i} |^2 ].
\end{split}
\end{equation}
Notice that \eqref{RBSDE-1} and \eqref{RBSDE-2} share the same $X_{\cdot\wedge u_i}$ dependence for $f$, we get the following similar estimates as in  \cite{hure2020deep}[Theorem 4.4],
\begin{equation}
\begin{split}
\Delta u_i\hE|Z^n_{u_i}-\overline{\widetilde Z_{u_i}}|^2 &\le 2d \Big[\hE|Y^n_{u_{i+1}}-\widehat{\mathcal U}_{u_{i+1}}(\sig(X^n)_{ u_{i+1}})|^2\\
&\qq\qq\qq\qq-\hE|\hE_{u_i}[Y^n_{u_{i+1}}-\widehat{\mathcal U}_{u_{i+1}}(\sig(X^n)_{ u_{i+1}}) ] |^2\Big].
\end{split}
\end{equation} 
Let $\gamma=4dL^2$, for $\Delta u_i$ small enough, we get 
\beaa
\hE|\widetilde Y^n_{u_i}-\widetilde{\mathcal V}_{u_i} |^2\le (1+Ck\Delta t)\hE|Y^n_{u_{i+1}}-\widehat{\mathcal U}_{u_{i+1}}(\sig(X^n)_{ u_{i+1}}) |^2.
\eeaa
Following the steps for deriving the estimates in \eqref{second inq}, we have
\begin{equation}
	\begin{split}
\label{variation last inq}
\hE|\widetilde Y^n_{u_i}-{\mathcal U}_{u_i}(\sig(X^n)_{u_i};\xi) |^2\le& (1+Ck\Delta t) \hE|Y_{u_{i+1}}^n-\widehat{\mathcal U}_{u_{i+1}}(\sig(X^n)_{u_{i+1}}) |^2 \\
&+C \tilde n \hE|\widetilde{\mathcal V}_{u_i}-{\mathcal U}_{u_i}(\sig(X^n)_{u_i};\xi) |^2.
	\end{split}
\end{equation}
Similar to \eqref{deterministic function} and \eqref{martingale rep}, applying the Martingale representation theorem for $\widehat{\mathcal U}_{u_{i+1}}(\sig(X^n)_{u_{i+1}})$, there exists an $\hR^d$-valued square integrable process $\widetilde Z$ such that 
\begin{equation}
\label{martingale rep-variation}
\widehat{\mathcal U}_{u_{i+1}}(\sig(X^n)_{u_{i+1}})
=\widetilde{\mathcal V}_{u_i}-f(u_i,X^n_{\cdot \wedge u_i}, \widetilde{\mathcal V}_{u_i}, \overline{\widetilde Z}_{u_i})\Delta u_i+\int_{u_i}^{u_{i+1}} \widetilde Z^{\ts}_tdW_t.
\end{equation}
Following \eqref{loss} in the previous case, in order to estimate the second term on the right hand side of \eqref{variation last inq}, we denote $\widehat L_{u_i}=\widetilde L_{u_i}+\hE[\int_{u_i}^{u_{i+1}}|\widetilde Z_t -\overline{\widetilde Z}_{u_i}|^2dt ]$, and 
\bea
&&\widetilde L_{u_i}(\theta):=\Delta u_i\hE|\overline{\widetilde Z}_{u_i}-\mathcal Z_{u_i}(\sig(X^n)_{u_i};\eta) |^2+\hE\Big|\widetilde V_{u_i}-\mathcal U_{u_i}(\sig(X^n)_{u_i};\xi) \nonumber\\
&&+[f(u_i,X^n_{\cdot\wedge u_i},\mathcal U_{u_i}(\sig(X^n)_{u_i};\xi),\mathcal Z_{u_i}(\sig(X^n)_{u_i};\eta))-f(u_i,X^n_{\cdot\wedge u_i},\widetilde{\mathcal V}_{u_i},\overline{\widetilde{Z}}_{u_i}) ] {\Delta} u_i\Big|^2.\nonumber
\eea
Following the steps in \eqref{error upper bound}, \eqref{interpolation}, \eqref{nn error euler} and applying the new error in \eqref{defn: NN error reflection}, recalling the fact $\widehat {\mathcal U}(\sig(X^n))_{u_i}=\max[\mathcal U(\sig(X^n)_{u_i};\xi^*);g(X^n_{\cdot\wedge u_i},u_i) ]$, $Y^n_{u_i}=\max[\widetilde Y^n_{u_i};g(X_{\cdot\wedge u_i},u_i) ]$, together with \eqref{variation last inq} and $|\max(a,c)-\max(b,c)|\le |a-b|$, we have 
\begin{equation}
\begin{split}
\hE|Y^n_{u_i}-\widehat{\mathcal U}_{u_i}(\sig(X^n)_{u_i}) |^2&\le (1+Ck\Delta t)\hE|Y^n_{u_{i+1}}-\widehat{\mathcal U}_{u_{i+1}}(\sig(X^n)_{u_{i+1}}) \|^2\\
&+C\tilde n( \varepsilon_{u_i}^{\cN,\tilde{ \mathsf v}}+\varepsilon_{u_i}^{\cN,\tilde{ \mathsf v}, \mathsf{Sig},m}+\Delta u_i (\varepsilon_{u_i}^{\cN,\tilde{ \mathsf z}}+\varepsilon_{u_i}^{\cN,\tilde{ \mathsf z}, \mathsf{Sig},m}) ).
\end{split}
\end{equation}
By induction and the estimate in \eqref{cummulated sig error}, we conclude
\bea
\label{est euler nn}
\max_{i=0,\cdots,\tilde n}\hE|Y^n_{u_i}-\widehat{\mathcal U}_{u_i}(\sig(X^n)_{u_i})|^2\le C \sum_{i=0}^{\tilde n-1}(\tilde n\varepsilon_{u_i}^{\cN,\tilde{ \mathsf v}}+ \varepsilon_{u_i}^{\cN,\tilde{ \mathsf z}}  )+\varepsilon^{\mathsf{Sig},m}.
\eea
Combining \eqref{RBSDE euler est} and \eqref{est euler nn}, we finish the proof. \qed 
\end{proof}

\section{Numerical Example}
\subsection{Amerasian Option}
We consider the following Amerasian option under the Black-Scholes model that involves  $d$ stocks $X_1, \dots, X_d$. The risk neural dynamics are given by  
\bea \label{Amerasian X}
dX_t^i = r X_t^i dt + \sigma_i X_t^{i} dW_t^i, \ X_0^i = x_0^i, \ i = 1, \dots, d,
\eea 
Here, $r$ represents the risk-free rate, $\sigma_i$ denotes the volatility, and $W^1, \dots, W^d$ are independent standard Brownian motions. The payoff of the basket Amerasian call option at a strike price of $K$ is defined as follows, where $(w_i)_{i=1, \dots, d}$ represents a vector of weights: 
\bea \label{defn: call}
g(X_{\cdot\wedge T}) = \left(\sum_{i=1}^d \frac{w_i}{T} \int_0^T X^i_t dt - K \right)^+.
\eea 
We have summarized the experimental results for Bermudan options in Table \ref{table:1}, with the following parameters:  $$X_0^i = 100, r = 5\%, \sigma_i = 0.15, w_i = \frac{1}{d}, T = 1, K = 100, \ \forall i \in \{1, \dots, d\}.$$
The European price is computed using Monte Carlo approximation to $e^{-rT}\mathbb{E}[g(X_{\cdot \wedge T})]$. Recall from Section \ref{sec-main}, we defined $n$ as the number of Euler scheme steps and $\tilde n$ as the number of segments for the signature layers, which corresponds to the number of exercise times in the Bermudan approximation. Bermudan options are a type of American option that permits early exercise, but only at predetermined dates. Using our method, we computed the Bermudan price with $\tilde{n} = 20$ and $n = 1000$. In contrast, the method from \cite{hure2020deep} uses $n = \tilde{n} = 20$, and the neural network takes all stock prices $X_t^i$ and their integrals $I_i(t)=\int_0^t X_s^i ds$ for $i = 1, \dots, d$ as inputs. By introducing the extra variables $I_i(t)$, for $i=1,\cdots, d$, we are able to transform the path-dependent problem to a state-dependent problem and apply the algorithm in \cite{hure2020deep}. 

We utilize a fully connected feedforward network featuring five hidden layers, each containing 16 neurons. The activation function for the hidden layers is tanh, while the output layer uses the identity function. To optimize our model, we use Adam Optimizer, which is implemented in TensorFlow, and utilize mini-batch with 100 trajectories for stochastic gradient descent. \footnote{The code can be found in the following URL link: \url{https://github.com/zhaoyu-zhang/sig_american}. The desktop we used in this study is equipped with an i7-8700 CPU. For all the examples in this paper, we generated 10,000 paths for the forward processes with a mini-batch of 100 paths.} 

We can compare our findings to those in \cite{dingpricing}, which put forth a formula for the price of the American option using a series expansion. However, this method cannot be used to price multi-dimensional American options. When using the same parameters but increased $\tilde{n}$ to 2024, the American option price in  \cite{dingpricing} was $4.6643$, with a $95\%$ confidence interval of $([4.5506, 4.7780])$. This result was slightly lower than the value of the European Asian option price computed using the Monte Carlo method in Table \ref{table:1}. Our results, which align with values from the Monte Carlo method of the European Asian option and the method in \cite{hure2020deep}, are shown in Table \ref{table:1}. Furthermore, our 95\% confidence interval from a sample of 100 runs is narrower, indicating greater stability in the price obtained from our method. Table \eqref{table:1} also reports the running time for obtaining a single value, which shows that the computation time is not exponential. Our algorithm is also approximately 4 times shorter than the method from \cite{hure2020deep}.

Our scheme can price high dimensional path-dependent American-type options. Applying the Jensen's inequality for $\hE[\cdot]$ and plugging in the parameters, we observe that $
\hE[e^{-rT}(\sum_{i=1}^d  \frac{1}{dT} \int_0^T X^i_t dt - K  )^+] \ge ( e^{-rT} \hE[ \frac{1}{d}\sum_{i=1}^d \frac{1}{T}\int_0^TX_t^idt-K])^+= 100e^{-r} (\frac{1}{d}\sum_{i=1}^d\int_0^1 \hE[$ $e^{(r-\frac{1}{2}\sigma_i^2)t+\sigma_iW_t^i}]dt-1 )^+=2.42$ for the European option price, which also provides a lower bound for the American option.  We can find an upper bound from the following observation:  For any stopping time $\tau\in[0, T]$, applying Jensen's inequality for the summation,  we have $e^{-r\tau}(\sum_{i=1}^d  \frac{1}{d\tau} \int_0^\tau X^i_t dt - K  )^+\le e^{-r\tau}\frac{1}{d}\sum_{i=1}^d(\frac{1}{\tau}\int_0^{\tau}X_t^idt-K )^+ $. After taking expectations and maximizing over $\tau$ and using the fact that the stocks are identically distributed, we see that $d=1$ case provides an upper bound.
Moreover, the value of the option is monotonically decreasing with respect $d$, which can be seen again using Jensen's inequality: for example, the value for the option $2d$ dimensions can be bounded by the $d$ dimensional counterpart using $e^{-r\tau}(\sum_{i=1}^{2d}\frac{1}{\tau(2d)}\int_0^\tau X_t^idt-K)^+\le \frac{1}{2}e^{-r\tau}(\sum_{i=1}^{d}\frac{1}{\tau d}\int_0^\tau X_t^idt-K)^++\frac{1}{2}e^{-r\tau}(\sum_{i=d+1}^{2d}\frac{1}{\tau d}\int_0^\tau X_t^idt-K)^+$. Furthermore, since the option price is decreasing in $d$, we can take the limit in $d$ first and use the Law of Large Numbers and Merton's no-early exercise theorem to conclude that the option price, in fact, converges as $d \to \infty$ to $X_0 e^{-rT}(e^{rT}/r-1/r-1)^+=2.42$. Our calculations in Table \ref{table:1} are in line with these theoretical observations.

The limitation of our signature algorithm is due to the fact that the dimension of the signature input grows exponentially in dimension. To reduce the limitation from the dimension of the signature is interesting on its own, we leave this for future studies.

 \begin{table}[h!]
	\begin{center}
		\begin{tabular}{ |c|c|c|c|c| } 
			\hline
			$d$  & Price  & CI  & Method & Time \\
			\hline
			\hline
			1 & 4.732 & -- & European Price & \\\hline
			 & 4.963 & [4.896, 5.03] & Our Method & 1631.85s \\
			 & 5.113 & [5.009, 5.217] & Method from \cite{hure2020deep} & 5457.59s\\\hline
			5 & 3.078 & -- & European Price & \\\hline
			 & 3.190 & [3.115, 3.266] & Our Method & 1927.55s\\
			 & 3.335 & [3.207, 3.462] & Method from\cite{hure2020deep}  & 5887.68s \\\hline
			10 & 2.701 & -- & European Price & \\\hline
			 & 2.914 & [2.844, 2.983] & Our Method &1947.64s\\
			 & 3.142 & [2.975, 3.309] & Method from \cite{hure2020deep} & 7636.96s \\\hline
			20 & 2.51 & -- & European Price & \\\hline
			  & 3.093 & [3.017, 3.168] & Our Method & 2287.57s\\
			  & 3.095 & [2.883, 3.308] & Method from \cite{hure2020deep} & 8887.61s\\ \hline
		\end{tabular}
	\end{center}
	\caption{Estimate of Bermudan option price ($Y_0$) from the average over 100 independent runs, and $95\%$ confidence interval (CI) of $Y_0$ are reported. }
	\label{table:1}
\end{table}

\subsection{Bermudan Option with moving average}
In this example, we study a truly path-dependent Bermudan option, which has been studied in \cite{BernardLelong2021} for one dimensional stock with moving average. We generalize this example to a more general and high dimensional truly path-dependent
 Bermudan put option and its payoff with a strike price $K$ is given by
\bea \label{defn: call path}
g(X_{\cdot\wedge T}) = \left(K - \sum_{i=1}^d  w_i I_i \right)^+,
\eea  
where $I_i = \left(\prod_{j=0}^{\tilde{n}-1} \frac{1}{t_{j+1} - t_{j}}\int_{t_j}^{t_{j+1}} X^i_t dt \right)^{1/\tilde{n}}$, and $t_j = jT/\tilde{n}$.
Our experiments on Bermudan options are summarized in Table \ref{table:2}, with the following parameters chosen:  $$X_0^i = 1, r = 5\%, \sigma_i = 0.15, w_i = \frac{1}{d}, T = 1, K = 1, n = 1000,  \ \forall i \in \{1, \dots, d\}.$$ 

 \begin{table}[h!]
	\begin{center}
		\begin{tabular}{ |c|c|c|c|c||c|c|c|c|c| } 
			\hline
			$d$ & $\tilde{n}$ & Price  & CI  & Time & $d$ & $\tilde{n}$ & Price  & CI & Time\\
			\hline
			\hline
			1 				 & 10 & 0.0277    & [0.0275, 0.0283] &  157.82s 
			& 10 & 10 & 0.0234  &  [0.0231, 0.0237] & 348.41s\\ \hline
			1				 & 20 & 0.0291 & [0.0272, 0.0311] & 386.10s 
			& 10 & 20 &  0.0247  &  [0.0242, 0.0253] & 852.41s\\  \hline
			1				 & 50 &  0.0367    & [0.0339, 0.0395] & 1391.24s
			 & 10 & 50 &  0.0265   & [0.0258, 0.0271] & 2779.21s\\  \hline
			\hline
			5 				 & 10 & 0.0258  & [0.0252, 0.0264] & 238.10s 
			& 20 & 10 & 0.0226 &[0.0217, 0.0223] &  653.64s\\ \hline
			5				 & 20 & 0.0273  &  [0.0266, 0.028] & 484.42s 
			& 20 & 20 &   0.0233  & [0.0228, 0.0237] & 1444.27s\\  \hline
			5				 & 50 &   0.0289  & [0.0277, 0.0301]& 1907.22s 
			& 20 & 50 & 0.0247 & [0.0239, 0.0255] & 4736.77s\\  \hline
		\end{tabular}
	\end{center}
	\caption{Estimate of Bermudan option price ($Y_0$) from the average over 100 independent runs, and $95\%$ confidence interval (CI) of $Y_0$ are reported.}
	\label{table:2}
\end{table}

\subsection{Shiryaev's optimal stopping problem}
In this section, we demonstrate the application of our algorithm to solve Shiryaev's optimal stopping problem \cite{bayraktarzhou2017}. 
\bea 
v^{\varepsilon}=\sup_{\tau\in \mathcal T_{0,T}} \mathbb E W_{(\tau-\varepsilon)^+},
\eea 
 where $\tau$ is a stopping time that takes values between $\varepsilon$ and $T$, a fixed time horizon. Here, $(W_t)_{0\le t\le T}$ represents Brownian motion, and $\varepsilon\in [0,T]$ is a constant delay parameter. We can obtain the solution to this problem by using the following path-dependent reflected BSDE (see \cite{bayraktarzhou2017}[Proposition 1]):
\bea 
&&W_{t-\varepsilon}\le Y_t=W_{T-\varepsilon} -\int_t^T Z_s dW_s+(K_T-K_t),\quad 0\le t\le T,\\
&&\int_0^T(Y_t-W_{t-\varepsilon}) dK_t = 0.
\eea 
It is important to note that this is a path-dependent reflected BSDE, which cannot be transformed into a state-dependent reflected BSDE. We apply our Algorithm \eqref{alg:the_alg} with $g(W_{\cdot \wedge t})=W_{t-\varepsilon}$. Both of the upper and lower bounds for the value $v^{(\varepsilon)}$ are established in
\cite{bayraktarzhou2017}.  When $\varepsilon$ falls within the range of $[0, T/2)$, $$\sqrt{\frac{2 \varepsilon }{\pi}}  <v^{(\varepsilon)}  \leq \sqrt{\frac{2 T }{\pi}}. $$ On the other hand, when $\varepsilon$ falls within the range of $[T/2, T]$, the value of $v^{(\varepsilon)}$ can be determined as $\frac{2(T-\varepsilon)}{\pi}$. Using $T=1$, Table \ref{table:3} displays the results of numerical experiments. The upper bound, which is $\sqrt{2 / \pi} = 0.798$, remains constant. However, the lower bounds vary for different values of $\varepsilon$, with values of 0.025, 0.08, and 0.252 for $\varepsilon = 0.001, 0.01,$ and $0.1$, respectively. Our numerical results fall within the established bounds.    Furthermore, as the number of segmentation increases, the value of $v^{(\varepsilon)}$ also increases due to the presence of more exercise dates.       When $\varepsilon$ lies within the range of $[0.5, 1]$, the exact value is given by $\sqrt{\frac{2(T-\varepsilon)}{\pi}}$.  For example, when $\varepsilon = 0.7$, our result approaches the explicit solution of 0.437 as the number of segmentation increases, as demonstrated in Figure \ref{pic:delay}.

  \begin{table}[h!]
	\begin{center}
		\begin{tabular}{ |c|c|c|c||c|c|} 
			\hline
			 \multicolumn{2}{|c|}{ } & \multicolumn{1}{c|}{$m=5$} & \multicolumn{1}{c||}{$m=10$} & lower bound & upper bound \\
			\hline
			$\varepsilon$  & $\tilde{n}$  & $v^{(\varepsilon)}$  & $v^{(\varepsilon)}$ & $\sqrt{2 \varepsilon /\pi}$   & $ \sqrt{2 T /\pi}$ \\\hline
			0.001 & 10 &  0.044   &  0.047 & 0.025 & 0.798   \\\hline
			 & 20 & 0.051     & 0.055  & 0.025 & 0.798 \\\hline
			 & 50 & 0.058      & 0.063  & 0.025 & 0.798 \\\hline
			0.01 & 10 & 0.117    & 0.131  & 0.08 & 0.798   \\\hline
			 & 20 & 0.165      & 0.161  & 0.08 & 0.798 \\\hline
			 & 50 & 0.219      & 0.192  & 0.08 & 0.798  \\\hline
			0.1 & 10 &  0.389     & 0.4  & 0.252 & 0.798 \\\hline
			 & 20 & 0.402     & 0.448  & 0.252 & 0.798 \\\hline
			 & 50 & 0.422       & 0.52  & 0.252 & 0.798   \\\hline
		\end{tabular}
	\end{center}
	\caption{Numerical Implementation of \cite{bayraktarzhou2017}.}
	\label{table:3}
\end{table}

\begin{figure}

\begin{center}
	\includegraphics[scale=0.6]{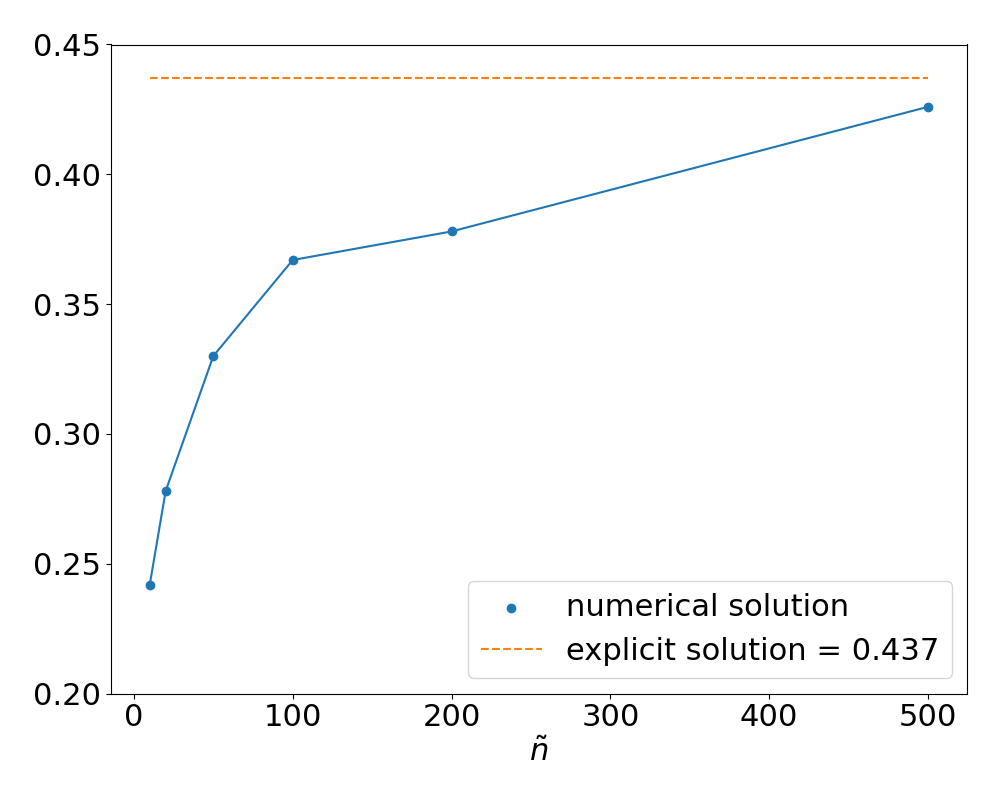}
	\caption{Numerical Implementation of \cite{bayraktarzhou2017} with delay = 0.7.}
	\label{pic:delay}
\end{center}
\end{figure}

\noindent\textbf{Acknowledgment} The authors would like to thank Xavier Warin for providing the code in \cite{hure2020deep}, and Huy\^en Pham for insightful discussions during his visit to U Michigan in April 2022.

\newpage

\end{document}